\newtheorem*{problem}{Problem}
\newtheorem{lemma}{Lemma}
\DeclarePairedDelimiter{\ceil}{\lceil}{\rceil}
\begin{document}

\inserttype[ba0001]{article}
\renewcommand{\thefootnote}{\fnsymbol{footnote}}
\author{N. Friel, A. Mira and C. J. Oates }{
 \fnms{Nial}
 \snm{Friel}
 \footnotemark[1]\ead{nial.friel@udc.ie},
 \fnms{Antonietta}
 \snm{Mira}
 \footnotemark[2]\ead{antonietta.mira@usi.ch},
and
  \fnms{Chris. J.}
  \snm{Oates}
  \footnotemark[3]\ead{c.oates@warwick.ac.uk}
}

\title[Reduced-Variance Estimation with Intractable Likelihoods]{Exploiting Multi-Core Architectures for Reduced-Variance Estimation with Intractable Likelihoods}

\maketitle

\footnotetext[1]{
 School of Mathematical Sciences and Insight: The National Centre for Data Analytics, University College Dublin, Ireland.
 \href{mailto:nial.friel@udc.ie}{nial.friel@udc.ie}
}
\footnotetext[2]{
Interdisciplinary Institute of Data Science and Institute of Finance, University of Lugano, Switzerland.
 \href{mailto:antonietta.mira@usi.ch}{antonietta.mira@usi.ch}
}
\footnotetext[3]{
Department of Statistics, University of Warwick, UK.
 \href{mailto:c.oates@warwick.ac.uk}{c.oates@warwick.ac.uk}
}
\renewcommand{\thefootnote}{\arabic{footnote}}

\begin{abstract}
Many popular statistical models for complex phenomena are intractable, in the sense that the likelihood function cannot easily be evaluated.
Bayesian estimation in this setting remains challenging, with a lack of computational methodology to fully exploit modern processing capabilities.
In this paper we introduce novel control variates for intractable likelihoods that can dramatically reduce the Monte Carlo variance of Bayesian estimators.
We prove that our control variates are well-defined and provide a positive variance reduction.
Furthermore we show how to optimise these control variates for variance reduction.
The methodology is highly parallel and offers a route to exploit multi-core processing architectures that complements recent research in this direction.
Indeed, our work shows that it may not be necessary to parallelise the sampling process itself in order to harness the potential of massively multi-core architectures.
Simulation results presented on the Ising model, exponential random graph models and non-linear stochastic differential equation models support our theoretical findings.

\keywords{\kwd{control variates}, \kwd{MCMC}, \kwd{parallel computing}, \kwd{zero variance}}
\end{abstract}

\section{Introduction} \label{intro}

Many models of interest are intractable, by which it is understood that the likelihood function $p(\bm{y}|\bm{\theta})$, that describes how data $\bm{y}$ arise from a model parametrised by $\bm{\theta} \in \Theta$, is unavailable in closed form.
The predominant sources of intractability that are encountered in statistical modelling can be classified as follows:

\begin{tabular}{rp{10cm}}
Type I: & The need to compute a normalising constant $\mathfrak{P}(\bm{\theta}) = \int f(\bm{y}';\bm{\theta}) d\bm{y}'$ that depends on parameters $\bm{\theta}$, such that $p(\bm{y}|\bm{\theta}) = f(\bm{y};\bm{\theta}) / \mathfrak{P}(\bm{\theta})$. \\[.25cm]
Type II: & The need to marginalise over a set of latent variables $\bm{x}$, such that $p(\bm{y}|\bm{\theta}) = \int p(\bm{y}|\bm{x},\bm{\theta})p(\bm{x}|\bm{\theta})d\bm{x}$.
\end{tabular}

Bayesian estimation in both of these settings can be extremely challenging as many established computational techniques (e.g. Gibbs sampling and Metropolis-Hastings) 
are incompatible with intractable likelihoods.
This has motivated researchers to propose several approximations to the likelihood function that are tractable \citep[e.g.][]{Marjoram,Moller,Murray,Rue}.
In the other direction, several (exact) Markov chain Monte Carlo (MCMC) algorithms have been proposed that facilitate inference in intractable models \citep[e.g.][]{Beskos,Andrieu,Andrieu2,Lyne}.
However, for MCMC methodology, 
it remains the case that estimator variance can be heavily inflated relative to the tractable case, due to the need to perform auxiliary calculations on extended state spaces in order to address the intractability \citep{Sherlock}. 
Below we elaborate on the two types of intractability and on the related references in the literature that have addressed them.

\paragraph{Type I:}

Intractability arises from the need to compute a parameter-dependent 
normalising constant (sometimes called a partition function).
This paper focuses on the sub-class of Type I intractable models known as Gibbs random fields (GRFs) where data $\bm{y}$ arises from a model of the form
\begin{eqnarray}
\log p(\bm{y}|\bm{\theta}) = \bm{\theta}^T\bm{s}(\bm{y}) - \log\mathfrak{P}(\bm{\theta})
\end{eqnarray}
such that the partition function
\begin{eqnarray}
\mathfrak{P}(\bm{\theta}) = \int \exp(\bm{\theta}^T\bm{s}(\bm{y})) d\bm{y}
\end{eqnarray}
is intractable. 
In a Bayesian context this leads to a ``doubly intractable'' distribution and is the subject of current research in the statistical
community.
Below we survey applications of, and methodology for, models exhibiting this form of intractability:

{\it Example 1 (Spatial statistics):
A GRF-type 
intractability arises in classical spatial statistics where we seek to model the joint distribution of variables $Y_j$ that are subject to local interactions.
The autologistic distribution \citep{bes74} is well-studied model 
for the analysis of binary spatial data defined on a lattice. This model has been applied in diverse 
contexts including ecology \citep{aug:mug96}, the spatial analysis of plant species \citep{huffer:wu,he:zhou:zhu} and dentistry \citep{Bandyopadhyay}.
The canonical Ising model is a special case of the autologistic distribution and is defined on a regular lattice of size $n \times n$, where $j$ is used to index each of the $n \times n$ different lattice locations.
Here the random variable $\bm{Y} \in \{-1,1\}^{n \times n}$ has a probability distribution defined in terms of a single sufficient statistic
\[
 s(\bm{y}) = \sum_{j=1}^{n \times n} \sum_{i\sim j} y_i y_j,
\]
where the notation $i\sim j$ means that the lattice point $i$ is a neighbour of lattice point $j$. 
Interactions are modelled between neighbouring lattice points $i \sim j$, being captured by the energy term $y_iy_j$.
The likelihood for this model takes the form of a GRF where the partition function
\begin{eqnarray}
\mathfrak{P}(\theta) = \sum_{\bm{y}' \in \{-1,1\}^{n \times n}} \exp(\theta s(\bm{y}'))
\end{eqnarray}
involves the summation over $2^{n \times n}$ different possible state vectors $\bm{y}'$.
Typically this summation is infeasible and leads to Type I intractability for all but small values of the lattice size $n$.
}

{\it Example 2 (Social network analysis):
Exponential random graph (ERG) models are widely used in social network analysis \citep[see][and the references therein]{rob:pat:kal:lus07}. 
The ERG model is defined on a random adjacency matrix $\bm{Y} = \{ Y_{ij}: i=1,\dots,n; j=1,\dots,n\}$ of a graph with 
$n$ nodes where $Y_{ij}=1$ if nodes $i$ and $j$ are connected by an edge, and $Y_{ij}=0$ otherwise. An edge connecting a node to itself is not permitted so $Y_{ii}=0$. 
The edges in an ERG may be undirected, whereby $Y_{ij}=Y_{ji}$, or directed, whereby a directed 
edge from node $i$ to node $j$ is not necessarily reciprocated. 
Write $\mathcal{G}(n)$ for the set of all permitted graphs on $n$ vertices.
The likelihood of an observed graph $\bm{y}$ is modelled in terms of a collection of sufficient statistics 
$\bm{s}(\bm{y}) = (s_1(\bm{y}),\dots,s_k(\bm{y}))$ and corresponding parameters $\bm{\theta}= (\theta_1,\dots,\theta_k)$.
For example, typical statistics include $s_1(y) = \sum_{i<j}y_{ij}$ and $s_2(y) = \sum_{i<j<k}y_{ik}y_{jk}$ that encode, respectively, the observed number of edges and two-stars, that is, the number of configurations of pairs of edges 
that share a common node. It is also possible to consider statistics that count the number of configuration of $k$  
edges that share a node in common, for $k>2$. 
The likelihood takes the form of a GRF where the partition function
\begin{eqnarray}
\mathfrak{P}(\bm{\theta}) = \sum_{\bm{y}' \in \mathcal{G}(n)} \exp(\bm{\theta}^T \bm{s}(\bm{y}'))
\end{eqnarray}
involves the summation over $|\mathcal{G}(n)| = O(2^{n \times n})$ possible different graphs and leads to Type I intractability for all but small values of the number $n$ of vertices.
}

The dependence of the partition function $\mathfrak{P}(\bm{\theta})$ on $\bm{\theta}$ leads to difficulties in inferring this parameter. 
An early attempt to circumvent this difficulty is the pseudolikelihood approach of \cite{Besag}, which in turn has been generalised to composite likelihood approximations, see for example \cite{Davison}. 
An alternative class of inferential approaches results from realising that, although one cannot evaluate the likelihood function, it is possible to sample pseudo-data from the generative model, so-called ``forward simulation''.
The Monte Carlo MLE approach of \cite{gey:tho92} exploits forward simulation to allow maximum likelihood estimation. 
From a Bayesian perspective, simulating from the likelihood has also played an influential role in several approaches,
for example, the auxiliary variable method of \cite{Moller}, that was subsequently extended by \cite{Murray} to the exchange algorithm. 
The exchange algorithm avoids the need to directly evaluate the partition function by considering an augmented target distribution $p(\bm{\theta},\bm{\theta}',\bm{y}'|\bm{y})$ 
that includes a second copy $\bm{\theta}'$ of the parameter vector and forward-simulated pseudo-data $\bm{y}'$ drawn from the likelihood function $p(\bm{y}'|\bm{\theta}')$, defined in such a way that the Markov chain transition kernel for the parameter vector $\bm{\theta}$ of interest involves partition functions for the current and proposed values of $\bm{\theta}$ that cancel in the numerator and denominator of the Metropolis-Hastings ratio, thus circumventing the Type I intractability issue (see Alg. \ref{exchange}) at the expense of increased Monte Carlo variance.

An emerging research direction is the construction of approximate Monte Carlo algorithms, providing convergence guarantees, in situations where it is
expensive or impossible to calculate the likelihood. This is particularly pressing in cases where the exchange algorithm is applied, since forward simulating from Gibbs random fields is challenging.
Perfect sampling is often prohibitively expensive or impossible to carry out, and in this case \cite{Everitt} has provided convergence results for the case where one uses the final draw from a
Gibbs sampler targeting the likelihood as an approximate realisation. In a similar vein, \cite{Alquier} and \cite{pillai} develop convergence results for approximate MCMC algorithms resulting
from approximating the transition kernel due to the intractability of the likelihood function. It is worth noting that several authors have used this type of approach to develop approximate
algorithms for large datasets by using subsets of the data to approximate the likelihood, \citep{welling,anh,Korattikara}.

\begin{algorithm}[t!]
\caption{Exchange algorithm for Type I intractability \citep{Murray}}\label{exchange}
\begin{algorithmic}[1]
\State Initialise $\bm{\theta}^{(0)}$, ${\bm{\theta}^{(0)}}'$, ${\bm{y}^{(0)}}'$.
\For{$i = 1,\dots,I$}
\State Obtain $\bm{\theta}' \sim h(\bm{\theta}'|\bm{\theta}^{(i-1)})$.
\State Obtain $\bm{y}' \sim p(\bm{y}'|\bm{\theta}')$.
\State Exchange $\bm{\theta}' \mapsto \bm{\theta}^{(i)}$, $\bm{\theta}^{(i-1)} \mapsto {\bm{\theta}^{(i)}}'$, $\bm{y}' \mapsto {\bm{y}^{(i)}}'$ with probability
\begin{eqnarray}
\alpha = \min \left\{ 1 , \frac{p(\bm{y}|\bm{\theta}')}{p(\bm{y}'|\bm{\theta}')} \frac{p(\bm{y}'|\bm{\theta}^{(i-1)})}{p(\bm{y}|\bm{\theta}^{(i-1)})} \frac{p(\bm{\theta'})}{p(\bm{\theta})} \frac{h(\bm{\theta}^{(i-1)|\bm{\theta}'})}{h(\bm{\theta}'|\bm{\theta}^{(i-1)})} \right\},
\end{eqnarray}
\; \; \; otherwise set $\bm{\theta}^{(i-1)} \mapsto \bm{\theta}^{(i)}$, ${\bm{\theta}^{(i-1)}}' \mapsto {\bm{\theta}^{(i)}}'$, ${\bm{y}^{(i-1)}}' \mapsto {\bm{y}^{(i)}}'$.
\EndFor
\end{algorithmic}
\end{algorithm}

\paragraph{Type II:}

Intractability arises from the need to marginalise over latent variables $\bm{x}$ such that the marginal likelihood 
\begin{eqnarray}
p(\bm{y}|\bm{\theta}) = \int p(\bm{y}|\bm{x},\bm{\theta}) p(\bm{x}|\bm{\theta}) d\bm{x} \label{ml}
\end{eqnarray}
is unavailable in closed form.
Such problems arise frequently in applied statistics and examples include inference for the parameters of spatio-temporal models \citep{Rue,Lyne}, regression models with random effects \citep{Fahrmeir}, time-series models \citep{West}, and selection between competing models based on Bayes factors \citep[e.g.][]{Caimo4,Armond}.
Below we provide examples of, and survey methodology for, models exhibiting this form of intractability:

{\it Example 3 (Hidden Markov model):
Applications of hidden Markov models abound in many areas, including finance, economics and biology. See \citep{cappe} for a detailed analysis of this general area.
In a hidden Markov model, 
the parameters $\bm{\theta}$ that specify a Markov chain
\begin{eqnarray}
\bm{x}_{n+1} \sim p(\bm{x}_{n+1}|\bm{x}_n,\bm{\theta})
\end{eqnarray}
may be of interest, whilst the latent sample path $\{\bm{x}_n\}_{n=0}^N$ of the Markov chain that gives rise to observations $\bm{y}_{n} \sim p(\bm{y}_{n}|\bm{x}_n)$ may not be of interest and must be marginalised.
Even in discrete cases where $\bm{x}_n \in \mathcal{X}$ for a finite state space $\mathcal{X}$, the number of possible samples paths $\{\bm{x}_n\}_{n=0}^N$ grows exponentially in $N$ and this renders the marginalisation
\begin{eqnarray}
p(\{\bm{y}_n\}_{n=0}^N|\bm{\theta}) = \sum_{\bm{x}_0,\dots,\bm{x}_N \in \mathcal{X}} p(\{\bm{y}_n\}_{n=0}^N|\{\bm{x}_n\}_{n=0}^N,\bm{\theta}) p(\{\bm{x}_n\}_{n=0}^N|\bm{\theta})
\end{eqnarray}
corresponding to Eqn. \ref{ml} computationally intractable.
}

{\it Example 4 (Stochastic differential equations):
Stochastic differential equations (SDEs) are widely used in several fields including biology \citep{Wilkinson} and finance \citep{Lamberton}. See \cite{oksendal} for an excellent introduction to SDEs, including a focus on several application areas.
A general stochastic diffusion is defined as
\begin{eqnarray}
d\bm{X}(t) =  \bm{\alpha}(\bm{X}(t);\bm{\theta}) dt + \bm{\beta}^{1/2}(\bm{X}(t);\bm{\theta}) d\bm{W}(t), \; \; \; \; \; \bm{X}(0) = \bm{X}_0,  \label{general sde}
\end{eqnarray}
where $\bm{X}(t)$ is a stochastic process taking values in $\mathbb{R}^d$, $\bm{\alpha} : \mathbb{R}^d \times \Theta \rightarrow \mathbb{R}^d$ is a drift function, $\bm{\beta} : \mathbb{R}^d \times \Theta \rightarrow \mathbb{R}^d \times \mathbb{R}^d$ is a diffusion function, $\bm{W}(t)$ is a $d$-dimensional Weiner process, $\bm{\theta} \in \Theta$ are unknown model parameters and $\bm{X}_0 \in \mathbb{R}^d$ is an initial state (assumed known here).
For general SDEs, an analytic form for the distribution of sample paths is unavailable.
An excellent review of approximate likelihood methods for SDEs is provided in \cite{Fuchs}.
To facilitate inference here, a popular approach is to introduce a fine discretisation $t_1,\dots,t_T$ of time with mesh size $\delta t$.
Write $\bm{X}_i = \bm{X}(t_i)$.
The Euler-Maruyama approximation to the SDE likelihood is then given by
\begin{eqnarray}
p(\bm{X}|\bm{\theta}) \propto \prod_{i=2}^T \psi(\bm{X}_i|\bm{X}_{i-1} + \bm{\alpha}_{i} \delta t, \bm{\beta}_{i} \delta t) 
\end{eqnarray}
where $\psi(\cdot|\bm{\mu},\bm{\Sigma})$ is the probability density function for a Gaussian random variable with mean $\bm{\mu}$ and covariance $\bm{\Sigma}$ and where we have used the shorthand $\bm{\alpha}_i = \bm{\alpha}(\bm{X}_{i-1};\bm{\theta})$ and $\bm{\beta}_i = \bm{\beta}(\bm{X}_{i-1};\bm{\theta})$.
We partition $\bm{X} = [\bm{X}^o \; \bm{X}^u]$ such that $\bm{y} = \bm{X}^o$ are observed (for simplicity here without noise) and $\bm{x} = \bm{X}^u$ are unobserved.
This is essentially a hidden Markov model with a continuous latent state and therefore exhibits Type II intractability, since to draw inferences on $\bm{\theta}$ it is required to marginalise the unobserved variables $\bm{X}^u$.
}

A popular contemporary approach to inference under Type II intractability is the pseudo-marginal MCMC of \cite{Andrieu}, that replaces the marginal likelihood $p(\bm{y}|\bm{\theta})$ in the Metropolis-Hastings acceptance ratio with an unbiased estimate that can either be obtained by forward-simulation from $p(\bm{x}|\bm{\theta})$, or using importance sampling techniques.
The pseudo-marginal MCMC typically leads to reduced efficiency relative to the (unavailable) marginal algorithm, but improved efficiency relative to a Markov chain constructed on the extended space $(\bm{\theta},\bm{x})$ \citep{Sherlock}.
When combined with particle MCMC \citep{Andrieu2}, the pseudo-marginal algorithm represents a popular technique to deal with general forms of Type II intractability.
Within specific model classes it may be possible to design specialised approaches to estimation; for example \cite{Kou} and \cite{Beskos2} both present sophisticated schemes for parameter inference in discretely observed stochastic differential equation models.
Other attempts to address Type II intractability include the popular approximation scheme of \cite{Rue} and the references therein.
Such schemes trade exactness of computation for substantial reduction in computational effort, but many questions surround the extent of approximation error \citep[e.g.][]{Lindgren}.

In summary, applications involving statistical models with both types of intractability are widespread in the literature.
Moreover, as detailed above, statistical methodology to overcome both types of intractability is at the frontier of research in computational statistics. Indeed one might anticipate that even wider 
applicability will result as these methods disseminate in the scientific community, whereby hitherto intractable statistical models will be amenable to statistical inference.

\paragraph{Outline of the paper:}

The present contribution addresses the problem of estimating posterior expectations via MCMC when data arise from an intractable likelihood:
\begin{problem}
Estimate the posterior expectation $\mu = \mathbb{E}_{\bm{\theta}|\bm{y}}[g(\bm{\theta})]$ for some known function $g: \Theta \rightarrow \mathbb{R}$, where data $\bm{y}$ arise from an intractable likelihood of either Type I or II.
\end{problem}

Our focus is on the use of control variates for the reduction of Monte Carlo variance \citep{Glasserman}.
The basic idea behind control variate schemes in Bayesian computation is that a modified function $\tilde{g}(\bm{\theta}) = g(\bm{\theta}) + \phi_1 h_1(\bm{\theta}) + \dots + \phi_m h_m(\bm{\theta})$ is constructed such that $\tilde{g}(\bm{\theta})$ has the same posterior expectation but a reduced posterior variance compared to $g(\bm{\theta})$.
This can occur when (i) each of the $h_i(\bm{\theta})$ have zero posterior expectation, (ii) the collection $[h_1(\bm{\theta}),\dots,h_m(\bm{\theta})]$ has strong posterior canonical correlation with the target $g(\bm{\theta})$ and (iii) the coefficients $\phi_1,\dots,\phi_m$ are chosen appropriately. 
Recently \cite{Mira} proposed to use the score vector $\bm{u}(\bm{\theta}|\bm{y}) := \nabla_{\bm{\theta}} \log p(\bm{\theta}|\bm{y})$ as the basis for a set of control variates, since this can be guaranteed to have zero expectation under mild boundary conditions (described below).
There it was shown that these score-based control variates can significantly reduce Monte Carlo variance, sometimes dramatically.
Indeed, the methodology was named ``zero variance'' (ZV) by \cite{Mira},
following \cite{Assaraf},
since in several special cases the score has perfect canonical correlation with the target, generating an estimate that has zero sampling variance.
Further support for the use of the score as a control variate was provided in \cite{Papamarkou,Oates}, who demonstrated that the approach fits naturally within 
Hamiltonian-type and Langevin-type MCMC schemes that themselves make use of the score, requiring essentially no additional computational effort.
It would therefore be extremely desirable to design control variates for intractable likelihoods, where sampling variance is acutely problematic.
However, for Bayesian inference with intractable likelihoods, the score is unavailable as it requires the derivative of unknown quantities.
Our work is motivated by overcoming this impasse.

The main contribution of this paper is to introduce a stochastic approximation to ZV control variates, called ``reduced-variance'' (RV) 
control variates, that can be computed for intractable likelihoods of both Type I and II.
Specifically we study the effect of replacing the true score function $\bm{u}(\bm{\theta}|\bm{y})$ for the intractable models in the ZV methodology with an unbiased estimate $\hat{\bm{u}}(\bm{\theta}|\bm{y})$ that can be obtained via repeated forward-simulation.
Importantly, these forward-simulations can be performed in parallel, offering the opportunity to exploit modern multi-core processing architectures \citep{Suchard,Lee} in a straight-forward manner that directly complements (and is compatible with) related research efforts for parallelisation of MCMC methodology \citep{Alquier,Angelino,Bardenet,Calderhead,Korattikara,Maclaurin}.

From a theoretical perspective, we prove that RV control variates are well-defined and provide a positive variance reduction.
Furthermore we propose default tuning parameters that are proven to maximise variance reduction and prove that the optimal 
estimator for serial computation requires essentially the same computational effort as the state-of-the-art estimate obtained under either the exchange algorithm or the pseudo-marginal algorithm.
These results are orthogonal to recent work by \cite{Doucet} and \cite{Sherlock} that deals with implementation of MCMC samplers themselves.
Empirical results presented on the Ising model, exponential random graphs and nonlinear stochastic differential equations support our theoretical findings.

\section{Methods}

\subsection{Control variates and intractable likelihoods}

Our presentation of control variate methodology below focuses on the problem of evaluating posterior expectations, but the methodology itself applies more broadly.
In this restricted setting, control variates can be employed when the aim is to estimate, with high precision, the posterior expectation $\mu = \mathbb{E}_{\bm{\theta}|\bm{y}}[g(\bm{\theta})]$ of a (real-valued) function $g(\bm{\theta})$ of an unknown parameter $\bm{\theta}$.
In this paper we focus on a real-valued random parameter $\bm{\theta} \in \Theta \subseteq \mathbb{R}^d$.
The generic control variate principle relies on constructing an auxiliary function $\tilde{g}(\boldsymbol{\theta})=g(\boldsymbol{\theta})+ h(\boldsymbol{\theta})$ 
where $\mathbb{E}_{\bm{\theta}|\bm{y}}[h(\bm{\theta})] = 0$ and so
$\mathbb{E}_{\bm{\theta}|\bm{y}}[\tilde{g}(\boldsymbol{\theta})] = \mathbb{E}_{\bm{\theta}|\bm{y}}[g(\boldsymbol{\theta})]$.
In many cases it is possible to choose $h(\bm{\theta})$ such that the variance $\mathbb{V}_{\bm{\theta}|\bm{y}}[\tilde{g}(\bm{\theta})] < \mathbb{V}_{\bm{\theta}|\bm{y}}[g(\bm{\theta})]$, leading to a Monte Carlo estimator with strictly smaller variance:
\begin{eqnarray}
\hat{\mu} := \frac{1}{n} \sum_{i=1}^n \tilde{g}(\bm{\theta}^{(i)}),
\end{eqnarray}
where $\bm{\theta}^{(1)}, \dots,\bm{\theta}^{(n)}$ are independent samples from $p(\bm{\theta}|\bm{y})$.
Intuitively, greater variance reduction can occur when $h(\bm{\theta})$ is negatively correlated with $g(\bm{\theta})$ in the posterior, since much of the randomness ``cancels out'' in the auxiliary function $\tilde{g}(\bm{\theta})$.

In classical literature the function $h(\bm{\theta})$ is often formed as a sum $\phi_1 h_1(\bm{\theta}) + \dots \phi_m h_m(\bm{\theta})$ where the $h_i(\bm{\theta})$ each have zero posterior expectation 
(under the target)  
and are known as control variates, whilst $\phi_i$ are coefficients that must be specified \citep{Glasserman}.
Alternative constructions also exist \citep[e.g. ratio control variates;][]{Evans} but here we focus only on control variates with an additive structure.
For estimation based on Markov chains, \cite{Andradottir} proposed control variates for discrete state spaces. 
Later \cite{Mira2} extended the approach of \cite{Assaraf}
observing that the optimal choice of $h(\bm{\theta})$ is intimately associated with the solution of the Poisson equation $h(\bm{\theta}) = \mathbb{E}_{\bm{\theta}|\bm{y}}[g(\bm{\theta})] - g(\bm{\theta})$ and proposing to solve this equation numerically.
Further work to construct control variates for Markov chains includes \cite{Hammer} for Metropolis-Hastings samplers and \cite{Dellaportas} for Gibbs samplers.

In this paper we consider the particularly elegant class of control variates that are expressed as functions of the score vector $\bm{u}(\bm{\theta}|\bm{y})$ of the log-posterior density.
\cite{Mira} proposed the ZV control variates
\begin{eqnarray}
h(\bm{\theta}|\bm{y}) = \Delta_{\bm{\theta}}[P(\bm{\theta})] + \nabla_{\bm{\theta}}[P(\bm{\theta})] \cdot \bm{u}(\bm{\theta}|\bm{y})
\label{ZV h}
\end{eqnarray}
where $\nabla_{\bm{\theta}} = [\partial/\partial_{\theta_1},\dots,\partial/\partial_{\theta_d}]^T$ is the gradient operator, $\Delta_{\bm{\theta}} = (\partial^2/\partial_{\theta_1}^2 + \dots + \partial^2/\partial_{\theta_d}^2)$ is the Laplacian operator and the ``trial function'' $P(\bm{\theta})$ belongs to the family $\mathcal{P}$ of polynomials in $\bm{\theta}$.
In this paper we adopt the convention that both $\bm{\theta}$ and $\bm{u}(\bm{\theta}|\bm{y})$ are $d \times 1$ vectors.
\cite{Mira} showed, in particular, that any posterior density $p(\bm{\theta}|\bm{y})$ approximating a Gaussian forms 
a suitable candidate for implementing the ZV scheme.
The ZV approach has recently been extended to encompass non-parametric trial functions $P(\bm{\theta})$.
\cite{Oates2} proves that the associated estimators posses superior convergence rates relative to estimation that does not use control variates.
A consequence of this latter approach is that large variance reductions can be achieved outside of the Gaussian setting.
For a comprehensive review of the ZV methodology see
\cite{Papamarkou2}.
Unfortunately ZV methods are not directly compatible with intractable likelihoods:

\paragraph{Type I:}
A naive application of ZV methods to GRFs with Type I intractability would require the score function, that is obtained by differentiating
\begin{eqnarray}
\log p(\bm{\theta}|\bm{y}) = \bm{\theta}^T\bm{s}(\bm{y}) - \log \mathfrak{P}(\bm{\theta}) + \log p(\bm{\theta}) + C,
\end{eqnarray}
where $C$ is a constant in $\bm{\theta}$, to obtain
\begin{eqnarray}
\bm{u}(\bm{\theta}|\bm{y}) = \bm{s}(\bm{y}) - \nabla_{\bm{\theta}} \log \mathfrak{P}(\bm{\theta}) + \nabla_{\bm{\theta}} \log p(\bm{\theta}). \label{deriv}
\end{eqnarray}
It is clear that Eqn. \ref{deriv} will not have a closed-form when the partition function $\mathfrak{P}(\bm{\theta})$ is intractable.
In the sections below we demonstrate how forward-simulation can be used to approximate $\nabla_{\bm{\theta}} \log \mathfrak{P}(\bm{\theta})$ and then leverage this fact to reduce Monte Carlo variance.

\paragraph{Type II:}
Similarly, a naive application of ZV within Type II intractable likelihood problems would require 
the evaluation of
the score function
\begin{eqnarray}
\bm{u}(\bm{\theta}|\bm{y}) = \nabla_{\bm{\theta}} \log \int p(\bm{y},\bm{x}|\bm{\theta}) p(\bm{x}|\bm{\theta}) d\bm{x} + \nabla_{\bm{\theta}} \log p(\bm{\theta}). \label{deriv2}
\end{eqnarray}
It is clear that Eqn. \ref{deriv2} will not have a closed-form when the integral over the latent variable $\bm{x}$ is intractable.
In the sections below we demonstrate how forward-simulation can be used to approximate $\nabla_{\bm{\theta}} \log \int p(\bm{y},\bm{x}|\bm{\theta}) p(\bm{x}|\bm{\theta}) d\bm{x}$, before again leveraging this fact to reduce Monte Carlo variance.

\subsection{Unbiased estimation of the score}

Our approach relies on the ability to construct an unbiased estimator for the score function in both Type I and Type II intractable models.

\paragraph{Type I:}
An unbiased estimator for $\bm{u}(\bm{\theta}|\bm{y})$, that can be computed for Type I models of GRF form, is constructed by noting that
\begin{eqnarray}
\nabla_{\bm{\theta}} \log\mathfrak{P}(\bm{\theta}) & = & \frac{1}{\mathfrak{P}(\bm{\theta})} \nabla_{\bm{\theta}} \mathfrak{P}(\bm{\theta})  \\
& = & \frac{1}{\mathfrak{P}(\bm{\theta})} \nabla_{\bm{\theta}} \int \exp(\bm{\theta}^T\bm{s}(\bm{y})) d\bm{y} \\
& = & \frac{1}{\mathfrak{P}(\bm{\theta})} \int \bm{s}(\bm{y}) \exp(\bm{\theta}^T\bm{s}(\bm{y})) d\bm{y} \\
& = & \mathbb{E}_{\bm{Y}|\bm{\theta}}[\bm{s}(\bm{Y})], \label{final typeI}
\end{eqnarray}
where we have assumed regularity conditions that permit the interchange of derivative and integral operators (including that the domain of $\bm{Y}$ does not depend on $\bm{\theta}$).
Specifically, 
combining Eqns. \ref{deriv} and \ref{final typeI}
we estimate the score function by exploiting multiple forward-simulations
\begin{eqnarray}
\hat{\bm{u}}(\bm{\theta}|\bm{y}) := \bm{s}(\bm{y}) - \left[\frac{1}{K} \sum_{k=1}^K \bm{s}(\bm{Y}_k)\right] + \nabla_{\bm{\theta}} \log p(\bm{\theta})
\end{eqnarray}
where the $\bm{Y}_1,\dots,\bm{Y}_K$ are independent simulations from the GRF with density $p(\bm{y}|\bm{\theta})$.
Forward-simulation for GRF has previously been leveraged to facilitate estimation \citep[e.g.][]{Potamianos} and can be achieved using, for example, perfect sampling \citep{Propp,Mira3}.
We make two important observations:
Firstly, one realisation $\bm{Y}_1$ must be drawn in any case to perform the exchange algorithm, so that this requires no additional computation.
Secondly, these $K$ simulations can be performed in parallel, enabling the exploitation of multi-core processing architectures.

\paragraph{Type II:}
For intractable models of Type II an alternative approach to construct an unbiased estimate for the score is required.
Specifically, we notice that the score $\bm{u}(\bm{\theta},\bm{x}) := \nabla_{\bm{\theta}} \log p(\bm{\theta},\bm{x}|\bm{y})$ of the extended
posterior is typically available in closed form and this can be leveraged as follows:
\begin{eqnarray}
\bm{u}(\bm{\theta} | \bm{y}) =
\nabla_{\bm{\theta}} \log p(\bm{\theta}|\bm{y}) & = & \frac{\nabla_{\bm{\theta}} p(\bm{\theta}|\bm{y})}{p(\bm{\theta}|\bm{y})} \\
& = & \frac{1}{p(\bm{\theta}|\bm{y})} \nabla_{\bm{\theta}} \int p(\bm{\theta},\bm{x}|\bm{y}) d\bm{x} \\
& = & \int \frac{[\nabla_{\bm{\theta}} p(\bm{\theta},\bm{x}|\bm{y})]}{p(\bm{\theta},\bm{x}|\bm{y})} \frac{p(\bm{\theta} , \bm{x}|\bm{y})}{p(\bm{\theta}|\bm{y})} d\bm{x} \\
& = & \int [\nabla_{\bm{\theta}} \log p(\bm{\theta},\bm{x}|\bm{y})] p(\bm{x}|\bm{\theta},\bm{y}) d\bm{x} = \mathbb{E}_{\bm{X}|\bm{\theta},\bm{y}}[\bm{u}(\bm{\theta},\bm{X})]\; \; \; \; \; \; \label{FI}
\end{eqnarray}
where again we have assumed regularity conditions that allow us to interchange the integral and the derivative operators.
We therefore have a simulation-based estimator
\begin{eqnarray}
\hat{\bm{u}}(\bm{\theta}|\bm{y}) := \frac{1}{K} \sum_{k=1}^K \bm{u}(\bm{\theta},\bm{X}_k)
\end{eqnarray}
where the $\bm{X}_1,\dots,\bm{X}_K$ are independent simulations from the posterior conditional $p(\bm{x}|\bm{\theta},\bm{y})$.
\citep[Eqn. \ref{FI} is sometimes called ``Fisher's identity'';][]{Nemeth}.
We note that it is straight-forward to implement pseudo-marginal MCMC in such a way that samples $\bm{X}_i$ are obtained
as a by-product, so that estimation of the score requires no additional computation.

\subsection{Reduced-variance control variates}

This paper advocates constructing control variates using an unbiased estimator for the score as follows:
\begin{eqnarray}
\hat{h}(\bm{\theta}|\bm{y}) := \Delta_{\bm{\theta}}[P(\bm{\theta})] + \nabla_{\bm{\theta}}[P(\bm{\theta})] \cdot \hat{\bm{u}}(\bm{\theta}|\bm{y}), \label{h}
\end{eqnarray}
where again $P \in \mathcal{P}$ is a polynomial trial function.
The coefficients $\bm{\phi}$ of this polynomial $P(\bm{\theta})$ must be specified and we will also write $P(\bm{\theta}|\bm{\phi})$ to emphasise this point.
These will be referred to as ``reduced-variance'' control variates from the fact that Eqn. \ref{h} is a stochastic approximation to the ZV control variates 
in Eqn. \ref{ZV h}
and can therefore be expected to have similar properties.
Pseudocode is provided in Alg. \ref{exchange2}.

\begin{algorithm}[t!]
\caption{Reduced-variance estimation for intractable likelihoods}\label{exchange2}
\begin{algorithmic}[1]
\State Obtain $\bm{\theta}^{(i)} \sim \bm{\theta}|\bm{y}$, $i = 1,\dots,I$ \Comment{using MCMC}
\For{$i = 1,\dots,I$}
\If{Type I}
\State Obtain $\bm{y}^{(i,k)} \sim \bm{Y}|\bm{\theta}^{(i)}$, $k = 1,\dots,K$ \Comment{simulate from the model}
\State Construct an approximation to the score at $\bm{\theta}^{(i)}$:
\begin{eqnarray}
\hat{\bm{u}}^{(i)} = \bm{s}(\bm{y}) - \left[\frac{1}{K} \sum_{k=1}^K \bm{s}(\bm{y}^{(i,k)})\right] + \nabla_{\bm{\theta}} \log p(\bm{\theta}^{(i)})
\end{eqnarray}
\Else{ {\bf if} Type II {\bf then}}
\State Obtain $\bm{x}^{(i,k)} \sim \bm{x}|\bm{\theta}^{(i)},\bm{y}$, $k = 1,\dots,K$ \Comment{simulate from the posterior}
\State Construct an approximation to the score at $\bm{\theta}^{(i)}$:
\begin{eqnarray}
\hat{\bm{u}}^{(i)} = \frac{1}{K} \sum_{k=1}^K \bm{u}(\bm{\theta},\bm{x}^{(i,k)})
\end{eqnarray}
\EndIf
\EndFor
\State Estimate optimal polynomial coefficients $\bm{\phi}^*$ by $\hat{\bm{\phi}}$ \Comment{see section \ref{phi hat}} 
\For{$i = 1,\dots,I$}
\State Construct the reduced-variance control variates
\begin{eqnarray}
\hat{h}^{(i)} = \Delta_{\bm{\theta}}[P(\bm{\theta}^{(i)}|\hat{\bm{\phi}})] + \nabla_{\bm{\theta}}[P(\bm{\theta}^{(i)}|\hat{\bm{\phi}})] \cdot \hat{\bm{u}}^{(i)}.
\end{eqnarray}
\EndFor
\State Estimate the expectation $\mu$ using
\begin{eqnarray}
\hat{\mu} := \frac{1}{I} \sum_{i=1}^I g(\bm{\theta}^{(i)}) + \hat{h}^{(i)}.
\end{eqnarray}
\end{algorithmic}
\end{algorithm}

For this idea to work it must be the case that the RV
control variates $\hat{h}(\bm{\theta}|\bm{y})$ have zero expectation. This is guaranteed under mild assumptions stated below:

\begin{lemma} \label{conditions}
Assume that $\Theta$ is possibly unbounded, $B_r$ are bounded sets increasing to $\Theta$ and $\lim_{r \rightarrow \infty} \oint_{\partial B_r} 
p(\bm{\theta}|\bm{y}) \nabla P(\bm{\theta}) \cdot \bm{n}(\bm{\theta}) d\bm{\theta} = 0$, where $\bm{n}(\bm{\theta})$ is the outward pointing unit normal field of the boundary $\partial B_r$.
Then, for Type I models, 
$\mathbb{E}_{\bm{\theta},\bm{Y}_1,\dots,\bm{Y}_K|\bm{y}}[\hat{h}(\bm{\theta}|\bm{y})] = 0$, whilst, for Type II models, 
$\mathbb{E}_{\bm{\theta},\bm{X}_1,\dots,\bm{X}_K|\bm{y}}[\hat{h}(\bm{\theta}|\bm{y})] = 0$, so that in both cases $\hat{h}(\bm{\theta}|\bm{y})$ is a well-defined control variate.
\end{lemma}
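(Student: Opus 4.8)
The plan is to exploit the law of iterated expectations, conditioning first on $\bm{\theta}$ and integrating out the forward simulations, and then recognising that the resulting object is precisely the ZV control variate of Eqn. \ref{ZV h}, whose posterior expectation vanishes by the divergence theorem together with the stated boundary condition.

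First I would observe that the factors $\Delta_{\bm{\theta}}[P(\bm{\theta})]$ and $\nabla_{\bm{\theta}}[P(\bm{\theta})]$ appearing in Eqn. \ref{h} do not depend on the simulations, which are, conditional on $\bm{\theta}$, independent draws from $p(\bm{y}'|\bm{\theta})$ in the Type I case and from $p(\bm{x}|\bm{\theta},\bm{y})$ in the Type II case. Taking the conditional expectation over the simulations therefore acts only on $\hat{\bm{u}}(\bm{\theta}|\bm{y})$, and the unbiasedness established in Eqns. \ref{final typeI} and \ref{FI} gives $\mathbb{E}[\hat{\bm{u}}(\bm{\theta}|\bm{y}) \,|\, \bm{\theta}] = \bm{u}(\bm{\theta}|\bm{y})$ in both settings. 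Hence the conditional expectation of $\hat{h}(\bm{\theta}|\bm{y})$ collapses exactly onto the ZV control variate $h(\bm{\theta}|\bm{y})$ of Eqn. \ref{ZV h}, and the tower property reduces the claim to showing $\mathbb{E}_{\bm{\theta}|\bm{y}}[h(\bm{\theta}|\bm{y})] = 0$.

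For this last step I would substitute $\bm{u}(\bm{\theta}|\bm{y}) = \nabla_{\bm{\theta}} p(\bm{\theta}|\bm{y}) / p(\bm{\theta}|\bm{y})$ into the posterior integral and note that, by the product rule, the integrand of $\mathbb{E}_{\bm{\theta}|\bm{y}}[h(\bm{\theta}|\bm{y})]$ is exactly the divergence $\nabla_{\bm{\theta}} \cdot (p(\bm{\theta}|\bm{y}) \nabla_{\bm{\theta}} P(\bm{\theta}))$, since $\nabla \cdot (p \nabla P) = \nabla p \cdot \nabla P + p\,\Delta P$. Integrating over the increasing bounded sets $B_r$ and applying the divergence theorem converts the volume integral into the boundary flux $\oint_{\partial B_r} p(\bm{\theta}|\bm{y}) \nabla P(\bm{\theta}) \cdot \bm{n}(\bm{\theta}) d\bm{\theta}$, which tends to zero as $r \to \infty$ by the stated hypothesis, completing the argument.

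The main obstacle I anticipate is not the algebra but the analytic bookkeeping at the unbounded-domain limit: justifying the interchange of the $r \to \infty$ limit with the integral so that $\lim_{r \to \infty} \int_{B_r} \nabla \cdot (p \nabla P) \, d\bm{\theta}$ genuinely equals $\mathbb{E}_{\bm{\theta}|\bm{y}}[h(\bm{\theta}|\bm{y})]$, and confirming the regularity needed for the divergence theorem (smoothness of the polynomial $P$ and integrability of $p\,\Delta P$ and $\nabla P \cdot \nabla p$ against the posterior). Because the boundary condition is supplied as a hypothesis, the crux reduces to verifying these integrability and dominated-convergence conditions. I would finally remark that the argument is identical for both intractability types, the sole difference being which unbiasedness identity, Eqn. \ref{final typeI} or Eqn. \ref{FI}, is invoked in the first step.
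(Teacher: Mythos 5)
Your proposal is correct and follows essentially the same route as the paper's proof: the tower property plus unbiasedness of $\hat{\bm{u}}$ collapses the problem to the ZV control variate, whose posterior expectation is then shown to vanish by rewriting the integrand as $\nabla_{\bm{\theta}} \cdot \left( p(\bm{\theta}|\bm{y}) \nabla_{\bm{\theta}} P(\bm{\theta}) \right)$ and invoking the divergence theorem with the stated boundary hypothesis. Your additional remarks on the limit-interchange and integrability bookkeeping are points the paper leaves implicit, but they do not alter the argument.
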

\begin{proof}
From unbiasedness of $\hat{\bm{u}}(\bm{\theta}|\bm{y})$ we have, for Type I models,
\begin{eqnarray}
\mathbb{E}_{\bm{\theta},\bm{Y}_1,\dots,\bm{Y}_K|\bm{y}}[\hat{h}(\bm{\theta}|\bm{y})] & = & \mathbb{E}_{\bm{\theta}|\bm{y}} \left[ \mathbb{E}_{\bm{Y}_1,\dots,\bm{Y}_K|\bm{\theta}} \left[ \Delta_{\bm{\theta}} P(\bm{\theta}) + \nabla_{\bm{\theta}} P(\bm{\theta}) \cdot \hat{\bm{u}}(\bm{\theta}|\bm{y}) \right] \right] \\
& = & \mathbb{E}_{\bm{\theta}|\bm{y}} \left[ \Delta_{\bm{\theta}} P(\bm{\theta}) + \nabla_{\bm{\theta}} P(\bm{\theta}) \cdot \bm{u}(\bm{\theta}|\bm{y}) \right],
\end{eqnarray}
with the analogous result holding for Type II models. 
The remainder follows from \cite{Mira}:
Using the definition of the score $\bm{u}(\bm{\theta}|\bm{y})$ we have
\begin{eqnarray}
& = & \int_{\Theta} [\Delta_{\bm{\theta}}P(\bm{\theta})] p(\bm{\theta}|\bm{y}) + [\nabla_{\bm{\theta}}P(\bm{\theta})] \cdot [\nabla_{\bm{\theta}} p(\bm{\theta}|\bm{y})] d\bm{\theta}.
\end{eqnarray}
Then applying the divergence theorem \citep[see e.g.][]{Kendall} we obtain
\begin{eqnarray}
& = & \int_{\Theta} \nabla_{\bm{\theta}} \cdot [[\nabla_{\bm{\theta}} P(\bm{\theta})] p(\bm{\theta}|\bm{y})] d\bm{\theta} \\
& = & \oint_{\partial \Theta} [[\nabla_{\bm{\theta}} P(\bm{\theta})] p(\bm{\theta}|\bm{y})] \cdot \bm{n}(\bm{\theta}) d\bm{\theta}.
\end{eqnarray}
The assumption of the Lemma forces this integral to equal zero, as required.
\end{proof}
To illustrate the mildness of these conditions, observe that in the case of a scalar parameter $\theta \in \mathbb{R}$ and a degree-one polynomial $P$, the boundary condition is satisfied whenever $\lim_{\theta \rightarrow \pm \infty}p(\theta|\bm{y}) = 0$.
More generally, it follows from the work of \cite{Oates} that, for unbounded state spaces $\Theta \subseteq \mathbb{R}^d$, a sufficient condition for unbiasedness is that the tails of $p(\bm{\theta}|\bm{y})$ vanish faster than $\|\bm{\theta}\|^{d+k-2}$ where $k$ is the degree of the polynomial $P$.
(Here $\|\cdot\|$ can be taken to be any norm on $\mathbb{R}^d$, due to the equivalence of norms in finite dimensions.)

\subsection{Optimising the tuning parameters} \label{optimisation}

Our proposed estimator has two tuning parameters; (i) the polynomial coefficients $\bm{\phi}$, and (ii) the number $K$ of forward-simulations from $\bm{Y}|\bm{\theta}$, in the case of Type I intractability, or from $\bm{X}|\bm{\theta},\bm{y}$ in the case of Type II intractability. 
In this section we derive optimal choices for both of these tuning parameters.
Here optimality is defined as maximising the variance reduction factor, that in the case of Type I models is defined as
\begin{eqnarray}
R := \frac{\mathbb{V}_K[g(\bm{\theta})]}{\mathbb{V}_K[g(\bm{\theta}) + \hat{h}(\bm{\theta}|\bm{y})]}
\label{R}
\end{eqnarray}
where the subscript $K$ indicates that randomness arises from the augmented posterior $p(\bm{\theta},\bm{Y}_1,\dots,\bm{Y}_K|\bm{y})$.
The case of Type II models simply replaces $\bm{Y}_1,\dots,\bm{Y}_K$ with $\bm{X}_1,\dots,\bm{X}_K$.
Below we proceed by firstly deriving the optimal coefficients $\bm{\phi}^*$ for fixed number $K$ of simulations and subsequently deriving the optimal value of $K$ assuming the use of optimal coefficients.

\subsubsection{Polynomial coefficients $\bm{\phi}$} \label{phi hat}

First we consider the optimal choice of polynomial coefficients $\bm{\phi}$; this follows fairly straight-forwardly from classical results.
For general degree polynomials $P(\bm{\theta}|\bm{\phi})$ with coefficients $\bm{\phi}$ we can write $\hat{h}(\bm{\theta}|\bm{y}) = \bm{\phi}^T \bm{m}(\bm{\theta},\hat{\bm{u}})$, where in the case of degree-one polynomials $\bm{m}(\bm{\theta},\hat{\bm{u}}) = \hat{\bm{u}}$ and for higher polynomials the map $\bm{m}$ is more complicated:
Suppose that we employ a polynomial
\begin{eqnarray}
P(\bm{\theta}) = \sum_{i=1}^d a_i \theta_i + \sum_{i,j=1}^d b_{i,j}\theta_i\theta_j + \sum_{i,j,k=1}^d c_{i,j,k}\theta_i\theta_j\theta_k + \dots
\end{eqnarray}
with coefficients $\bm{\phi} = \{a_i,b_{i,j},c_{i,j,k}, \dots\}$.
For convenience, we assume symmetries $b_{\tau(i,j)} = b_{i,j}$, $c_{\tau(i,j,k)} = c_{i,j,k}$, etc. for all permutations $\tau$.
Then from Eqn. \ref{h}
\begin{eqnarray}
\hat{h}(\bm{\theta}|\bm{y}) & = & \left[ 2\sum_{i=1}^d b_{i,i} + 6\sum_{i,j=1}^d c_{i,i,j} \theta_j + \dots \right] \nonumber \\
& & +  \sum_{i=1}^d \left[ a_i + 2\sum_{j=1}^d b_{i,j}\theta_j + 3\sum_{j,k=1}^d c_{i,j,k}\theta_j\theta_k + \dots \right] \hat{u}_i(\bm{\theta}|\bm{y}).
\end{eqnarray}
This can in turn be re-written as $\hat{h}(\bm{\theta}|\bm{y}) = \bm{\phi}^T \bm{m}(\bm{\theta},\hat{\bm{u}})$ where the components $\{a_i,b_{i,j},c_{i,j,k},\dots\}$ of $\bm{\phi}$ and $\bm{m}(\bm{\theta},\hat{\bm{u}})$ are identified in the inner product as
\begin{eqnarray}
a_i & \leftrightarrow & \hat{u}_i \\
b_{i,i} & \leftrightarrow & 2 + 2\theta_i\hat{u}_i \\
b_{i,j} & \leftrightarrow & 2\theta_j\hat{u}_i + 2\theta_i\hat{u}_j \; \; \; (i<j) \\
c_{i,i,i} & \leftrightarrow & 6\theta_i + 3\theta_i^2\hat{u}_i \\
c_{i,i,j} & \leftrightarrow & 12\theta_j + 12\theta_i\theta_j\hat{u}_i + 6 \theta_i^2\hat{u}_k \; \; \; (i<j) \\
c_{i,j,k} & \leftrightarrow & 6\theta_j\theta_k\hat{u}_i + 6\theta_i\theta_k\hat{u}_j + 6\theta_i\theta_j\hat{u}_k \; \; \; (i<j<k) \; \; \; \dots
\end{eqnarray}
An optimal choice of coefficients for general degree polynomials is given by the following:

\begin{lemma}
For Type I models, the variance reduction factor $R$ is maximised over all possible coefficients $\bm{\phi}$ by the choice
\begin{eqnarray}
\bm{\phi}^*(\bm{y}) := -\mathbb{V}_K^{-1}[\bm{m}(\boldsymbol{\theta},\hat{\bm{u}})] \mathbb{E}_K[g(\bm{\theta}) \bm{m}(\boldsymbol{\theta},\hat{\bm{u}})] \label{optimal phi} 
\end{eqnarray}
and, at the optimal value $\bm{\phi} = \bm{\phi}^*$, we have
\begin{eqnarray}
R^{-1} = 1 - \rho(K)^2 
\label{corr min}
\end{eqnarray}
where $\rho(K) = \text{\emph{Corr}}_K[g(\bm{\theta}),\hat{h}(\bm{\theta}|\bm{y})]$.
An analogous result holds for Type II models, replacing $\bm{Y}_1,\dots,\bm{Y}_K$ with $\bm{X}_1,\dots,\bm{X}_K$.
\end{lemma}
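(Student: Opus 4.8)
The plan is to recognise this as the classical optimal control variate (equivalently, ordinary least squares) problem, exploiting the fact established immediately above that $\hat{h}(\bm{\theta}|\bm{y}) = \bm{\phi}^T \bm{m}(\bm{\theta},\hat{\bm{u}})$ is \emph{linear} in the coefficient vector $\bm{\phi}$. Since the numerator $\mathbb{V}_K[g(\bm{\theta})]$ does not depend on $\bm{\phi}$, maximising $R$ is equivalent to minimising the denominator $\mathbb{V}_K[g(\bm{\theta}) + \bm{\phi}^T\bm{m}(\bm{\theta},\hat{\bm{u}})]$ over all coefficient vectors $\bm{\phi}$.

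First I would record that $\mathbb{E}_K[\bm{m}(\bm{\theta},\hat{\bm{u}})] = \bm{0}$. This follows from Lemma \ref{conditions}: that result gives $\mathbb{E}_K[\hat{h}(\bm{\theta}|\bm{y})] = 0$, hence $\bm{\phi}^T \mathbb{E}_K[\bm{m}] = 0$ for every $\bm{\phi}$, forcing $\mathbb{E}_K[\bm{m}] = \bm{0}$. Writing $V := \mathbb{V}_K[\bm{m}(\bm{\theta},\hat{\bm{u}})]$ and $\bm{c} := \mathbb{E}_K[g(\bm{\theta})\bm{m}(\bm{\theta},\hat{\bm{u}})]$, the centering identity then gives $\text{Cov}_K[\bm{m}, g] = \bm{c}$, so the denominator expands as the quadratic
\[
\mathbb{V}_K[g + \bm{\phi}^T\bm{m}] = \mathbb{V}_K[g] + 2\bm{\phi}^T\bm{c} + \bm{\phi}^T V \bm{\phi}.
\]
Differentiating with respect to $\bm{\phi}$ and equating to zero yields the normal equations $V\bm{\phi} + \bm{c} = \bm{0}$, whose solution is exactly $\bm{\phi}^* = -V^{-1}\bm{c}$, matching the claimed formula \eqref{optimal phi}. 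Because $V$ is a covariance matrix it is positive semi-definite, so the quadratic is convex and $\bm{\phi}^*$ is a global minimiser; I would assume $V$ is non-degenerate (the components of $\bm{m}$ are not almost-surely linearly dependent) so that $V^{-1}$ exists and the minimiser is unique.

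Substituting $\bm{\phi}^*$ back and collecting the telescoping terms $-2\bm{c}^T V^{-1}\bm{c} + \bm{c}^T V^{-1}\bm{c}$ gives the minimal variance $\mathbb{V}_K[g] - \bm{c}^T V^{-1}\bm{c}$, hence $R^{-1} = 1 - \bm{c}^T V^{-1}\bm{c}/\mathbb{V}_K[g]$. It then remains to identify the subtracted term with $\rho(K)^2$. Evaluating at $\bm{\phi}^*$, direct computation gives $\text{Cov}_K[g,\hat{h}] = (\bm{\phi}^*)^T\bm{c} = -\bm{c}^T V^{-1}\bm{c}$ and $\mathbb{V}_K[\hat{h}] = (\bm{\phi}^*)^T V \bm{\phi}^* = \bm{c}^T V^{-1}\bm{c}$, so that
\[
\rho(K)^2 = \frac{(\bm{c}^T V^{-1}\bm{c})^2}{\mathbb{V}_K[g]\,(\bm{c}^T V^{-1}\bm{c})} = \frac{\bm{c}^T V^{-1}\bm{c}}{\mathbb{V}_K[g]},
\]
which is precisely the subtracted term, establishing $R^{-1} = 1 - \rho(K)^2$. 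The Type II case is identical with $\bm{X}_1,\dots,\bm{X}_K$ in place of $\bm{Y}_1,\dots,\bm{Y}_K$. The only genuine subtlety is the non-degeneracy of $V$ needed to invert it; everything else is routine linear algebra, and the main conceptual step is the reduction to least squares afforded by the linearity of $\hat{h}$ in $\bm{\phi}$ together with $\mathbb{E}_K[\bm{m}] = \bm{0}$.
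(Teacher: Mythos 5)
Your proof is correct and is exactly the standard control-variate/least-squares argument that the paper's own proof defers to with a citation to Rubinstein and Marcus (1985, p.~664); you have simply written out the details the paper omits, including the necessary observation that $\mathbb{E}_K[\bm{m}]=\bm{0}$ follows from Lemma \ref{conditions} and the (reasonable, implicitly assumed) non-degeneracy of $\mathbb{V}_K[\bm{m}]$. No gaps.
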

\begin{proof}
This is a standard result in control variate theory for a linear combination of (well-defined) control variates \citep[e.g. p. 664,][]{Rubinstein}.
\end{proof}

Following the recommendations of \cite{Mira,Papamarkou,Oates} we mainly restrict attention to polynomials of degree at most two.
Indeed, degree-two polynomials are sufficient for exactness in the special cases discussed in \cite{Papamarkou}.
Similarly following \cite{Mira}, we estimate $\bm{\phi}^*$ by plugging in the empirical variance and covariance matrices into Eqn. \ref{optimal phi} to obtain an estimate $\hat{\bm{\phi}}$.
This introduces estimator bias since the same samples
are ``used twice'', however \cite{Glasserman} argues that this bias vanishes more quickly than the Monte Carlo error and hence the error due to this plug-in procedure is typically ignored.
(Any bias could alternatively be removed via a sample-splitting step, but this does not seem necessary for the examples that we consider below.)

\subsubsection{Number of forward-simulations $K$}

Now we derive an optimal number $K$ of forward-simulations to generate at each state $\bm{\theta}^{(i)}$ visited in the MCMC sample path, assuming the use of optimal coefficients as derived above.
Assuming that parallel computations occur no additional cost, this optimum will depend on the number $K_0$ of cores that are available for parallel processing in the computing architecture and we consider the general case below.
We present the following Lemma for Type I models, but the analogous result holds for Type II models by simply replacing $\bm{Y}_1,\dots,\bm{Y}_K$ with $\bm{X}_1,\dots,\bm{X}_K$.

\begin{lemma} \label{CLT}
Assume that (i) the condition of Lemma \ref{conditions} is satisfied, (ii) perfect transitions of the Markov chain (i.e. perfect mixing) 
is achieved
(iii) $\mathbb{E}_K[(g(\bm{\theta}) + \hat{h}(\bm{\theta}|\bm{y}))^2] <\infty$, and (iv) $\bm{\phi} = \bm{\phi}^*$. Then
\begin{eqnarray}
\sqrt{I} (\hat{\mu} - \mu) \xrightarrow{d} N\left( 0, (1-\rho(K)^2)\mathbb{V}_{\bm{\theta}|\bm{y}}[g(\bm{\theta})] \right).
\end{eqnarray}
\end{lemma}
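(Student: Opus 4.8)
The plan is to recognise that, under assumption (ii), the statement reduces to the classical Lindeberg--L\'evy central limit theorem applied to a suitable sequence of i.i.d.\ random variables, followed by an algebraic identification of the limiting variance using the preceding Lemma. First I would define the per-iteration summand
$$Z_i := g(\bm{\theta}^{(i)}) + \hat{h}(\bm{\theta}^{(i)}|\bm{y}),$$
so that $\hat{\mu} = I^{-1}\sum_{i=1}^I Z_i$ is a sample mean. The key structural observation is that perfect mixing makes the states $\bm{\theta}^{(1)},\dots,\bm{\theta}^{(I)}$ i.i.d.\ draws from $p(\bm{\theta}|\bm{y})$, while the forward simulations $\bm{Y}_1^{(i)},\dots,\bm{Y}_K^{(i)}$ (or the $\bm{X}_k^{(i)}$ for Type II) used to form $\hat{\bm{u}}^{(i)}$ are generated afresh and independently at each iteration $i$. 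Since assumption (iv) fixes $\bm{\phi}=\bm{\phi}^*$ --- a deterministic function of $\bm{y}$ rather than a quantity estimated from the sample --- the pairs $(\bm{\theta}^{(i)},\bm{Y}_{1:K}^{(i)})$ are i.i.d., and hence so are the $Z_i$.

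Next I would verify the hypotheses of the i.i.d.\ CLT. The common mean is
$$\mathbb{E}_K[Z_i] = \mathbb{E}_{\bm{\theta}|\bm{y}}[g(\bm{\theta})] + \mathbb{E}_K[\hat{h}(\bm{\theta}|\bm{y})] = \mu + 0 = \mu,$$
where the second term vanishes by Lemma \ref{conditions} (invoked through assumption (i)) and the first equals $\mu$ because $g$ depends only on $\bm{\theta}$. Assumption (iii) supplies a finite second moment, hence a finite variance $\sigma^2 := \mathbb{V}_K[g(\bm{\theta})+\hat{h}(\bm{\theta}|\bm{y})]$. Applying the Lindeberg--L\'evy theorem then yields
$$\sqrt{I}(\hat{\mu}-\mu) \xrightarrow{d} N(0,\sigma^2).$$

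It remains to identify $\sigma^2$ with the claimed expression. Here I would invoke the optimality statement of the previous Lemma: at $\bm{\phi}=\bm{\phi}^*$ one has $R^{-1} = 1-\rho(K)^2$, where $R = \mathbb{V}_K[g(\bm{\theta})]/\mathbb{V}_K[g(\bm{\theta})+\hat{h}(\bm{\theta}|\bm{y})]$. Rearranging gives $\sigma^2 = (1-\rho(K)^2)\,\mathbb{V}_K[g(\bm{\theta})]$, and since $g$ is a function of $\bm{\theta}$ alone its variance under the augmented posterior coincides with its variance under the marginal, $\mathbb{V}_K[g(\bm{\theta})] = \mathbb{V}_{\bm{\theta}|\bm{y}}[g(\bm{\theta})]$. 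Substituting completes the identification; the Type II case is identical after replacing the $\bm{Y}_k$ by the $\bm{X}_k$.

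I anticipate that the only genuinely delicate point --- and the one I would state most carefully --- is the independence argument underpinning the i.i.d.\ reduction: it relies crucially on the idealisation that the chain mixes perfectly (so the $\bm{\theta}^{(i)}$ are independent) and that $\bm{\phi}$ is the true optimum rather than the plug-in estimate $\hat{\bm{\phi}}$ of Section \ref{phi hat}. Were $\bm{\phi}$ instead estimated from the same sample, the $Z_i$ would cease to be independent and one would need a Slutsky-type argument establishing $\hat{\bm{\phi}}\to\bm{\phi}^*$ together with control of the induced dependence; the bias remark following the previous Lemma explains why that refinement is set aside here.
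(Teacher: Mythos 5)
Your proposal is correct and follows essentially the same route as the paper's own proof: assumption (i) gives the mean via Lemma \ref{conditions}, assumptions (ii) and (iii) justify the i.i.d.\ central limit theorem, and assumption (iv) together with the optimality identity $R^{-1}=1-\rho(K)^2$ identifies the limiting variance. You merely make explicit two points the paper leaves implicit --- the i.i.d.\ structure of the pairs $(\bm{\theta}^{(i)},\bm{Y}_{1:K}^{(i)})$ and the fact that $\mathbb{V}_K[g(\bm{\theta})]=\mathbb{V}_{\bm{\theta}|\bm{y}}[g(\bm{\theta})]$ --- which is a welcome clarification rather than a departure.
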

\begin{proof}
From (i) we have that $\mathbb{E}_K[g(\bm{\theta}) + \hat{h}(\bm{\theta}|\bm{y})] = \mu$.
From (ii), (iii) and the central limit theorem we have that 
\begin{eqnarray}
\sqrt{I} ( \hat{\mu} - \mu ) \xrightarrow{d} N(0,\mathbb{V}_K[g(\bm{\theta}) + \hat{h}(\bm{\theta}|\bm{y})]).
\end{eqnarray}
Then from (iv) and Eqn. \ref{R} we have that 
\begin{eqnarray}
\mathbb{V}_K[g(\bm{\theta}) + \hat{h}(\bm{\theta}|\bm{y})] = (1-\rho(K)^2) \mathbb{V}_{\bm{\theta}|\bm{y}}[g(\bm{\theta})]
\end{eqnarray}
as required.
\end{proof}

Write $I$ for the number of MCMC iterations.
Then, under the hypotheses of Lemma \ref{CLT}, the key quantity that we aim to minimise is the cost-normalised variance ratio
\begin{eqnarray}
r(K,I) := \frac{1-\rho(K)^2}{I}, \label{tradeoff}
\end{eqnarray}
where the optimisation is constrained by fixed computational cost $c = I \ceil{K/K_0}$ on a $K_0$-core architecture.
In other words, for fixed computational cost $c$, should we focus on obtaining more MCMC samples (large $I$) or better estimating the 
RV 
control variates (large $K$)?
(Note that we assume the calculation of the score vector incurs negligible computational cost - this is certainly true whenever the score is itself a pre-requisite for MCMC sampling.)
This resource-allocation problem can be solved analytically:

\begin{lemma} \label{one}
The optimum variance for fixed computational cost (i.e. $c = I \ceil{K/K_0}$) is always achieved by 
setting $K=K_0$, the available number of cores.
\end{lemma}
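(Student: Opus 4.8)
The plan is to reduce the constrained resource-allocation problem to a one-dimensional monotonicity statement about $\rho(K)$. For fixed cost $c = I\ceil{K/K_0}$ we may write $I = c/\ceil{K/K_0}$, so that minimising the cost-normalised ratio $r(K,I)$ of Eqn.~\ref{tradeoff} over admissible $(K,I)$ is equivalent to minimising
\begin{eqnarray}
f(K) := \left(1-\rho(K)^2\right)\ceil{K/K_0}
\end{eqnarray}
over positive integers $K$. I would then establish two monotonicity facts and combine them: (a) $\rho(K)^2$ is nondecreasing in $K$, which controls $f$ \emph{within} a block on which $\ceil{K/K_0}$ is constant; and (b) the unnormalised quantity $g(K) := K\,(1-\rho(K)^2)$ is nondecreasing in $K$, which controls $f$ \emph{across} blocks via the elementary bound $\ceil{K/K_0} \ge K/K_0$, with equality precisely at multiples of $K_0$.

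To obtain both facts I would first pin down the dependence of $\rho(K)^2$ on $K$. Since $\hat{\bm{u}}(\bm{\theta}|\bm{y})$ is an average of $K$ conditionally independent terms, for both Type I and Type II models one has $\hat{\bm{u}} = \bm{u}(\bm{\theta}|\bm{y}) + \bm{\epsilon}$ where, conditional on $\bm{\theta}$, the error $\bm{\epsilon}$ has mean zero and covariance $\tfrac{1}{K}\bm{\Sigma}(\bm{\theta})$ for a $K$-independent single-sample covariance $\bm{\Sigma}$. Because the map $\bm{m}(\bm{\theta},\hat{\bm{u}})$ in Eqn.~\ref{h} is affine in $\hat{\bm{u}}$, say $\bm{m}(\bm{\theta},\hat{\bm{u}}) = \bm{m}(\bm{\theta},\bm{u}) + M(\bm{\theta})\bm{\epsilon}$, the cross terms vanish after conditioning on $\bm{\theta}$ and I would show
\begin{eqnarray}
\mathbb{V}_K[\bm{m}] = A + \tfrac{1}{K}B, \qquad \text{Cov}_K(g,\bm{m}) = \bm{c},
\end{eqnarray}
where $A = \mathbb{V}_{\bm{\theta}|\bm{y}}[\bm{m}(\bm{\theta},\bm{u})]$ and $B = \mathbb{E}_{\bm{\theta}|\bm{y}}[M\bm{\Sigma}M^T]$ are positive semi-definite and $\bm{c}$ is a vector, all independent of $K$. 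Feeding this into the optimal-coefficient identity of the preceding Lemma (Eqns.~\ref{corr min}--\ref{optimal phi}) gives the standard multiple-correlation form $\rho(K)^2 = \sigma_g^{-2}\,\bm{c}^T(A + \tfrac{1}{K}B)^{-1}\bm{c}$, where $\sigma_g^2 := \mathbb{V}_{\bm{\theta}|\bm{y}}[g(\bm{\theta})]$. Monotonicity (a) is then immediate, since $A + \tfrac1K B$ is nonincreasing in the positive-definite order as $K$ grows.

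Fact (b) is the crux of the argument. Assuming the non-degenerate case in which the matrices are positive definite, I would simultaneously diagonalise the pencil $(A,B)$ by a congruence $T$ with $T^T A T = I$ and $T^T B T = \text{diag}(d_i)$, $d_i \ge 0$, and set $\tilde{\bm{c}} = T^T\bm{c}$, yielding the scalarised representation $\rho(K)^2 = \sigma_g^{-2}\sum_i \tilde c_i^2\, K/(K + d_i)$. The key inequality is that $\rho_\infty^2 := \lim_{K\to\infty}\rho(K)^2 = \sigma_g^{-2}\sum_i \tilde c_i^2 \le 1$, because $\rho(K)$ is a genuine correlation coefficient for every $K$. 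Using $K^2/(K+d_i) = K - Kd_i/(K+d_i)$ I would then rewrite
\begin{eqnarray}
g(K) = K\left(1-\rho_\infty^2\right) + \sigma_g^{-2}\sum_i \tilde c_i^2\,\frac{K d_i}{K + d_i},
\end{eqnarray}
exhibiting $g$ as a sum of nondecreasing functions: the first term is linear with nonnegative slope $1-\rho_\infty^2 \ge 0$, and each $Kd_i/(K+d_i)$ is increasing in $K$. The main obstacle is precisely securing this clean decomposition and the sign $\rho_\infty^2 \le 1$; the degenerate case (singular $A$ or an irregular pencil, which can only occur when the control variates are collinear) I would dispatch by a limiting/continuity argument on $A + \varepsilon I$, noting that $\mathbb{V}_K[\bm{m}]$ must be invertible for $\bm{\phi}^*$ to be defined at all.

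Finally I would combine (a) and (b). For $1 \le K \le K_0$ we have $\ceil{K/K_0}=1$, so $f(K) = 1-\rho(K)^2 \ge 1-\rho(K_0)^2 = f(K_0)$ by (a). For $K > K_0$, the bound $\ceil{K/K_0} \ge K/K_0$ together with (b) gives
\begin{eqnarray}
f(K) \;\ge\; \frac{K}{K_0}\left(1-\rho(K)^2\right) \;=\; \frac{g(K)}{K_0} \;\ge\; \frac{g(K_0)}{K_0} \;=\; 1-\rho(K_0)^2 \;=\; f(K_0).
\end{eqnarray}
Hence $f$ attains its minimum at $K = K_0$, which is the claimed conclusion that the optimum is always achieved by setting $K$ equal to the number $K_0$ of available cores.
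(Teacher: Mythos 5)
Your proof is correct and reaches the paper's conclusion, but by a genuinely more careful route. The paper's own proof substitutes $I = c/\ceil{K/K_0}$ exactly as you do, but then invokes a separate technical lemma asserting the scalar identity $\rho(K)^{-2} = \rho(\infty)^{-2} + C/K$ (obtained by writing $\hat{h} = h + \epsilon$ for a \emph{fixed} trial polynomial and computing the inflation of $\mathbb{V}[\hat{h}]$), plugs this into $r(K)$, and then disposes of the resulting discrete minimisation with an appeal to ``first principles'': $r$ is increasing along $K_0, 2K_0, 3K_0,\dots$ and decreasing within each block $((n-1)K_0, nK_0]$. You differ in two substantive ways. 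First, you derive the $K$-dependence of $\rho(K)^2$ in the multiple-correlation form $\sigma_g^{-2}\bm{c}^T(A+B/K)^{-1}\bm{c}$, which correctly accounts for the coefficients $\bm{\phi}^*$ being re-optimised at each $K$ and for $\bm{m}$ being vector-valued; the paper's identity is the one-dimensional special case (a single $d_i$, with $d_1 = C\rho(\infty)^2$), and at the optimum in the vector case it need not hold with a single constant $C$, so your version is the one that actually covers degree-two and higher polynomials. Second, you make the monotonicity step rigorous: within-block monotonicity follows from Loewner monotonicity of $(A+B/K)^{-1}$, and the across-block comparison follows from your fact (b) that $K(1-\rho(K)^2)$ is nondecreasing, exhibited via the decomposition into $K(1-\rho_\infty^2)$ plus increasing terms $Kd_i/(K+d_i)$, which crucially uses $\rho_\infty\le 1$; combining with $\ceil{K/K_0}\ge K/K_0$ then pins the minimum at $K=K_0$. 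Fact (b) is precisely the content the paper needs for its claim (ii), and your argument also repairs a small slip in the paper's claim (iii), where the decreasing function on $((n-1)K_0,nK_0]$ should be bounded below by its value at the right endpoint $nK_0$ rather than at $(n-1)K_0$. The price of your route is the extra linear algebra and the degenerate-pencil caveat, which you handle acceptably by perturbation and by noting that $\mathbb{V}_K[\bm{m}]$ must be invertible for $\bm{\phi}^*$ to exist.
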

\begin{proof}
See the Appendix.
\end{proof}

Our findings may be concisely summarised as follows:
For serial computation, choose $K=1$ and $I$ as large as possible. 
This typically requires no additional computation relative to standard estimation since one forward-simulation $\bm{Y}$ is generated as part of the exchange algorithm and at least one forward-simulation $\bm{X}$ is used as the basis for the pseudo-marginal algorithm.
For parallel computation, choose $K = K_0$ equal to the number of available cores (but no more) and then 
let $I$ be as large as possible.

Finally we note that RV control variates extend easily to the case where multiple expectations $\mu_j = \mathbb{E}_{\bm{\theta}|\bm{y}}[g_j(\bm{\theta})]$ are 
of interest.
Indeed the same MCMC output can be used to construct control variates $h^{(j)}(\bm{\theta}|\bm{y})$ specific to problem $j$ simply by re-estimating the optimal coefficients 
\begin{eqnarray}
\bm{\phi}^{*,(j)}(\bm{y}) = - \mathbb{V}_K^{-1}[\bm{m}(\bm{\theta},\hat{\bm{u}})] \mathbb{E}_K[g_j(\bm{\theta})\bm{m}(\bm{\theta},\hat{\bm{u}})]
\end{eqnarray}
based on the target function $g_j$ and proceeding as above.
In this way multiple expectations can be estimated without requiring any additional sampling or simulation.

\section{Applications}

Here we provide empirical results for an analytically tractable example, along with a version of the Ising model 
(Type I intractability), 
an exponential random graph model (Type I) and a nonlinear stochastic differential equation model (Type II).

\subsection{Example 1: Tractable exponential}

As a simple and analytically tractable example, consider inference for the posterior mean $\mu = \mathbb{E}_{\theta|y}[\theta]$, so that $g(\theta)=\theta$, where data $y$ arise from the exponential distribution $p(y|\theta) = \theta \exp(-\theta y)$ and inference is performed using an improper prior $p(\theta) \propto 1$.
The exponential likelihood can be formally viewed as a GRF with sufficient statistic $s(y) = -y$ and partition function 
$\mathfrak{P}(\theta) = \frac{1}{\theta}$, 
however the model is sufficiently simple that all quantities of interest are available in closed form.
Indeed it can easily be verified that $p(\theta|y) = y^2\theta\exp(-\theta y)$, so that the posterior is directly seen to satisfy the boundary condition of Lemma \ref{conditions} for any polynomial.
The true posterior expected value is $\mu = \frac{2}{y}$ and similarly the score function can be computed exactly as $u(\theta|y) = -y + \frac{1}{\theta}$.

\begin{figure}[t!]
\includegraphics[width = 0.49\textwidth]{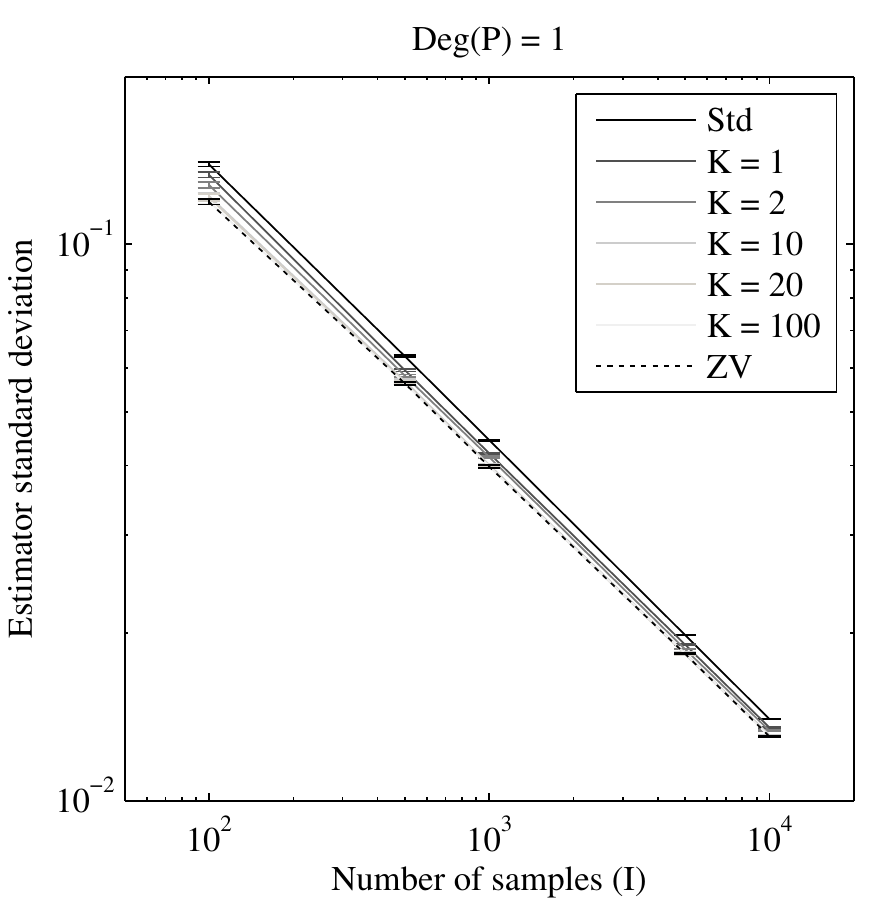}
\includegraphics[width = 0.49\textwidth]{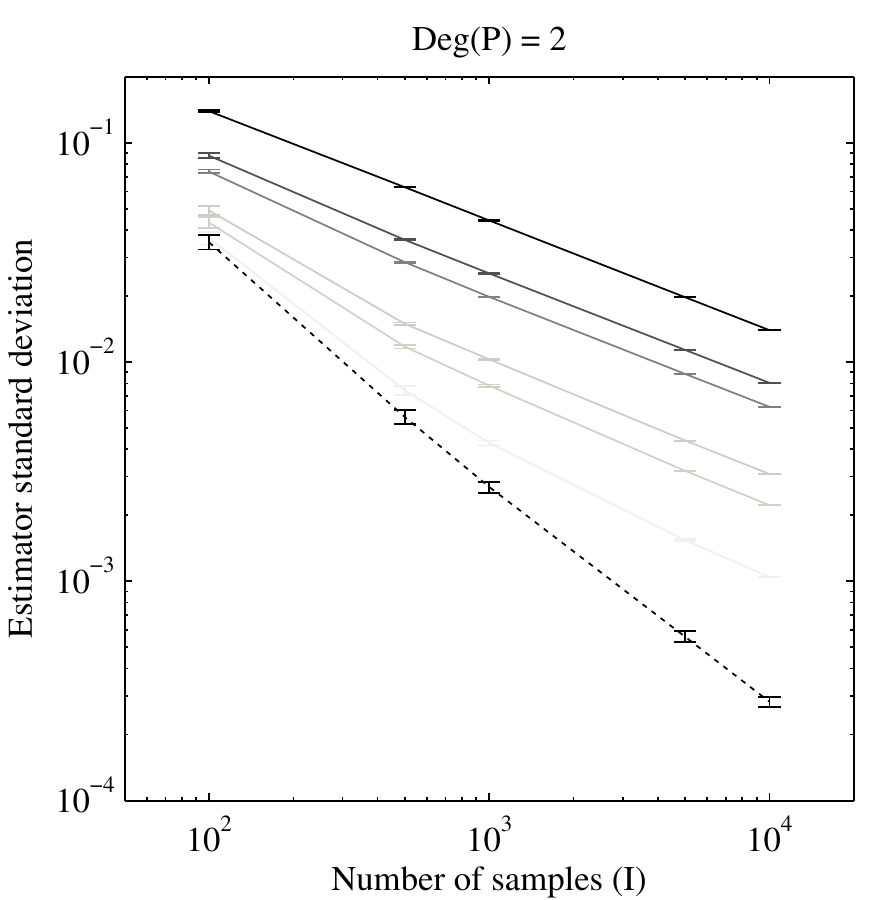}\\
\includegraphics[width = 0.49\textwidth]{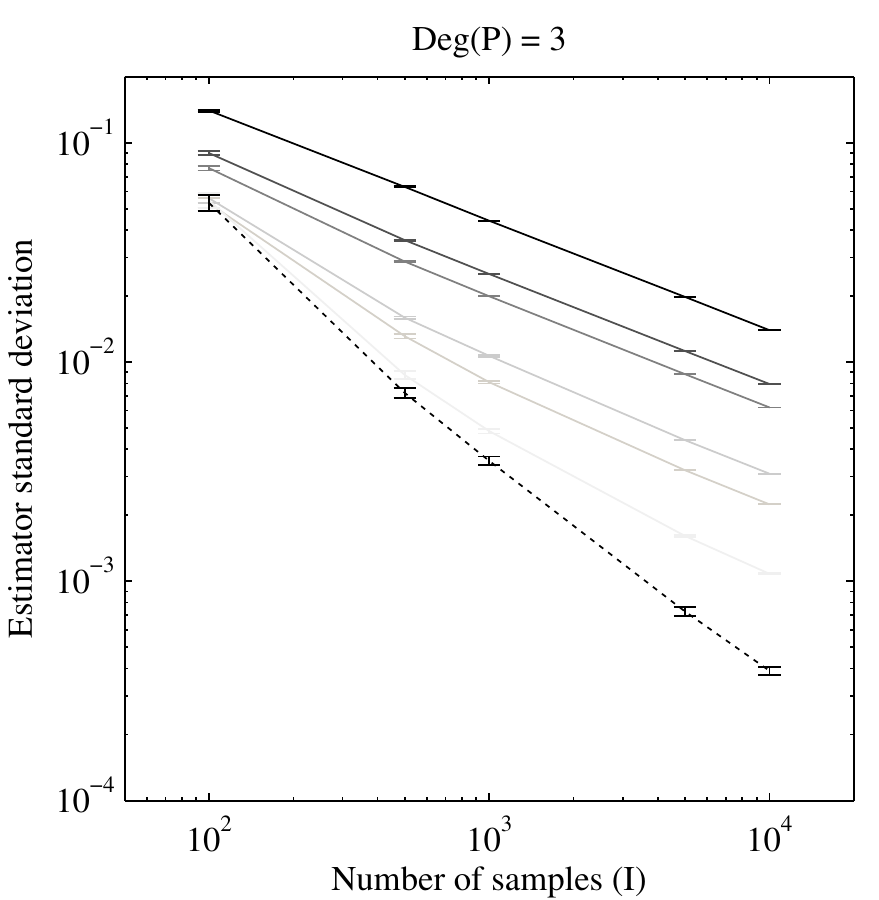}
\includegraphics[width = 0.49\textwidth]{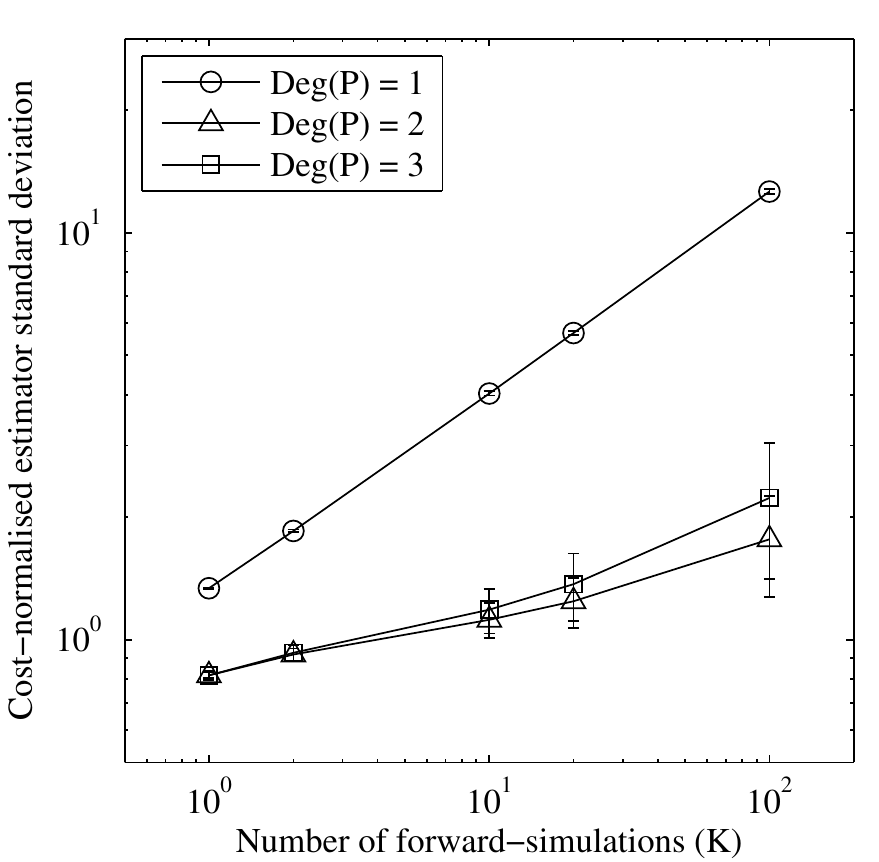}
\caption{
Tractable exponential example.
Comparing the standard deviations of Monte Carlo estimators, including the default estimator (``Std'';  the regular MCMC estimator with no variance reduction), 
the reduced-variance estimator with $K=1$, 2, 10, 20 and 100, and the ZV estimator.
[The ZV estimate has some non-zero standard deviation here because, in practice, the optimal coefficients $\bm{\phi}(y)^*$ must be estimated using Monte Carlo.]
The final panel (bottom right) displays estimator standard deviation normalised by computational cost.}
\label{tablefigure}
\end{figure}

All of the estimators that we consider are (essentially) unbiased (as noted before, the negligible bias resulting from estimation of $\hat{\bm{\phi}}$ can trivially be removed by data-splitting); in this section we therefore restrict attention to examining the estimator variances.
The maximum variance reduction that we achieve with access to the exact score 
can be obtained from $\rho(\infty) = \text{Corr}_{\theta|y}(\theta,\bm{\phi}^*(y)^T\bm{m}(\theta,u))$.
For degree-one polynomials $P(\theta) = a\theta$ the ZV method corresponds to $m(\theta,u) = u = -y + \frac{1}{\theta}$ and, since $\theta$ is not strongly linearly correlated with $\frac{1}{\theta}$, the maximum variance reduction that can be achieved by degree-one polynomials is not substantial.
However the use of degree-two polynomials $P(\theta) = a\theta + b\theta^2$ leads to $\bm{m}(\theta,u) = [u , 2+2\theta u] = [-y + \frac{1}{\theta} , 4 - 2y \theta]$ and taking $\bm{\phi} = [0,\frac{1}{2y}]$ leads to a control variate $\bm{\phi}^T\bm{m}(\theta,u) = \frac{2}{y} - \theta$.
Thus the ZV estimator $g(\theta) + \bm{\phi}^T\bm{m}(\theta,u)$ is equal to $\frac{2}{y}$, which is independent of $\theta$, i.e. exact zero variance is achieved.

In general the score $u(\theta|y)$ will be unavailable for GRF but may be estimated by $\hat{u}(\theta|y)$ as described above, with the estimate becoming exact as $K \rightarrow \infty$.
We investigate through simulation the effect of employing finite values of $K$.
Intuitively the proposed approach will be more effective when the target function $g(\theta)$ of interest is strongly correlated (under the posterior) with a linear combination $\bm{\phi}^T\bm{m}(\theta,\hat{u})$.
Fig. S1 demonstrates that when $K$ is large, the RV
control variates (for degree-two polynomials) are closely correlated with the ZV control variates (left column) and, hence, with the target function $g(\theta)$ (right column).
We would therefore expect to see a large reduction in Monte Carlo variance using RV
estimation in this regime.

The main conclusions to be drawn from this tractable example are summarised in Fig. \ref{tablefigure}, where we display estimates for the estimator standard deviation $\text{std}[\hat{\mu}]$, computed as the standard error of the mean over all $I$ Monte Carlo samples.
In total the estimation procedure was repeated 100 times and we report the mean value of $\text{std}[\hat{\mu}]$ along with the standard error of this mean computed over the 100 realisations.
We considered varying the number of Monte Carlo samples $I = 100,500,1000,5000,10000$, the number of forward-simulations $K = 1,2,10,20,100$ and the degree of the polynomial trial function $\text{deg}(P) = 1,2,3$.
Results demonstrate that estimator variance reduces as either $I$ or $K$ is increased, as expected.
A comparison {\it between} the plots (full data provided in  Table S1) shows that degree-two polynomials considerably out-perform the degree-one polynomials, whereas the degree-three polynomials tend to slightly under-perform the degree-two polynomials.
(The theoretical best ZV control variates are degree-two polynomials and therefore degree-three polynomials require that additional coefficients
associated to higher order control variates - that we know, theoretically, should be equal to zero - are estimated from data, thus adding extra noise.)

To assess computational efficiency, we also report the quantity $\sqrt{IK}\text{std}[\hat{\mu}]$ that has units ``standard deviation per unit serial computational cost'' and can be used to evaluate the computational efficiency of competing strategies (bottom right panel of Fig. \ref{tablefigure} and  Table S1).
Here we see that $K=1$ minimises $\sqrt{IK}\text{std}[\hat{\mu}]$ and is consistent with the theoretical result that $K=1$ is optimal for serial computation.
Fig. S2 plots the canonical correlation coefficient between $\bm{m}(\theta,\hat{u})$ and $g(\theta)$ for values of $K = 1,2,\dots,10$.
Here we notice that over $80\%$ of the correlation is captured by just one forward-simulation from the likelihood ($K=1$), further supporting our theoretical result that $K=1$ is optimal for serial computation.
Indeed, a theoretical prediction $\rho(K) = (K/(K+C))^{1/2}$ resulting from Lemma \ref{technical} in the Appendix, shown as a solid line
in Fig. S2, closely matches these simulation results.

\subsection{Example 2: Ising model}

In the experiments below we consider an Ising model of size $n = 16$, about the limit for exact solution, as defined in Sec. \ref{intro}, 
Example 1,
above.
Assuming that the lattice points have been indexed from top to bottom in each column and that columns are ordered from left to right, then an interior point  $y_i$ in a first order neighbourhood model has neighbours $\{y_{i-n}, y_{i-1}, y_{i+1},y_{i+n}\}$. 
Each point along the edges of the lattice has either two or three neighbours. 
We focus on estimating the posterior mean $\mu = \mathbb{E}_{\theta|\bm{y}}[\theta]$ under a prior $\theta \sim N(0,5^2)$. 
Since the tails of the prior vanish exponentially and the likelihood is bounded, the posterior automatically satisfies the boundary conditions of Lemma \ref{conditions}.
Here data $\bm{y}$ were simulated exactly from the likelihood using $\theta=0.4$, via the recursive scheme of \cite{Friel2}. This recursive algorithm also allows exact calculation of the partition function. 
In turn this allow a very precise estimate of the posterior mean; 
for the data that we consider below this posterior mean is $\mu = 0.43455$, calculated numerically over a very fine grid of $\theta$ values. 

\begin{figure}[h]
\centering
  \includegraphics[width=.48\textwidth]{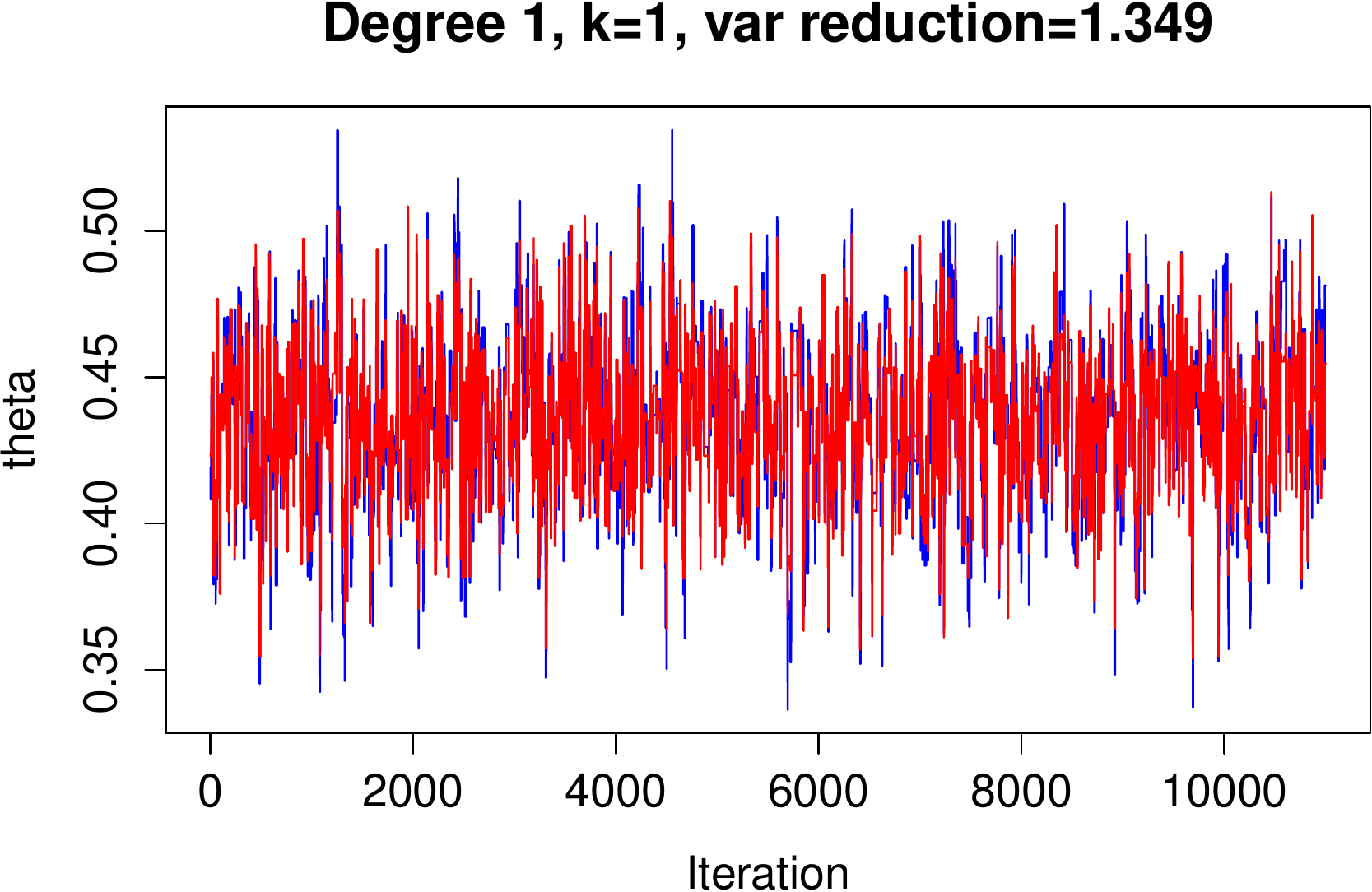} 
  \includegraphics[width=.48\textwidth]{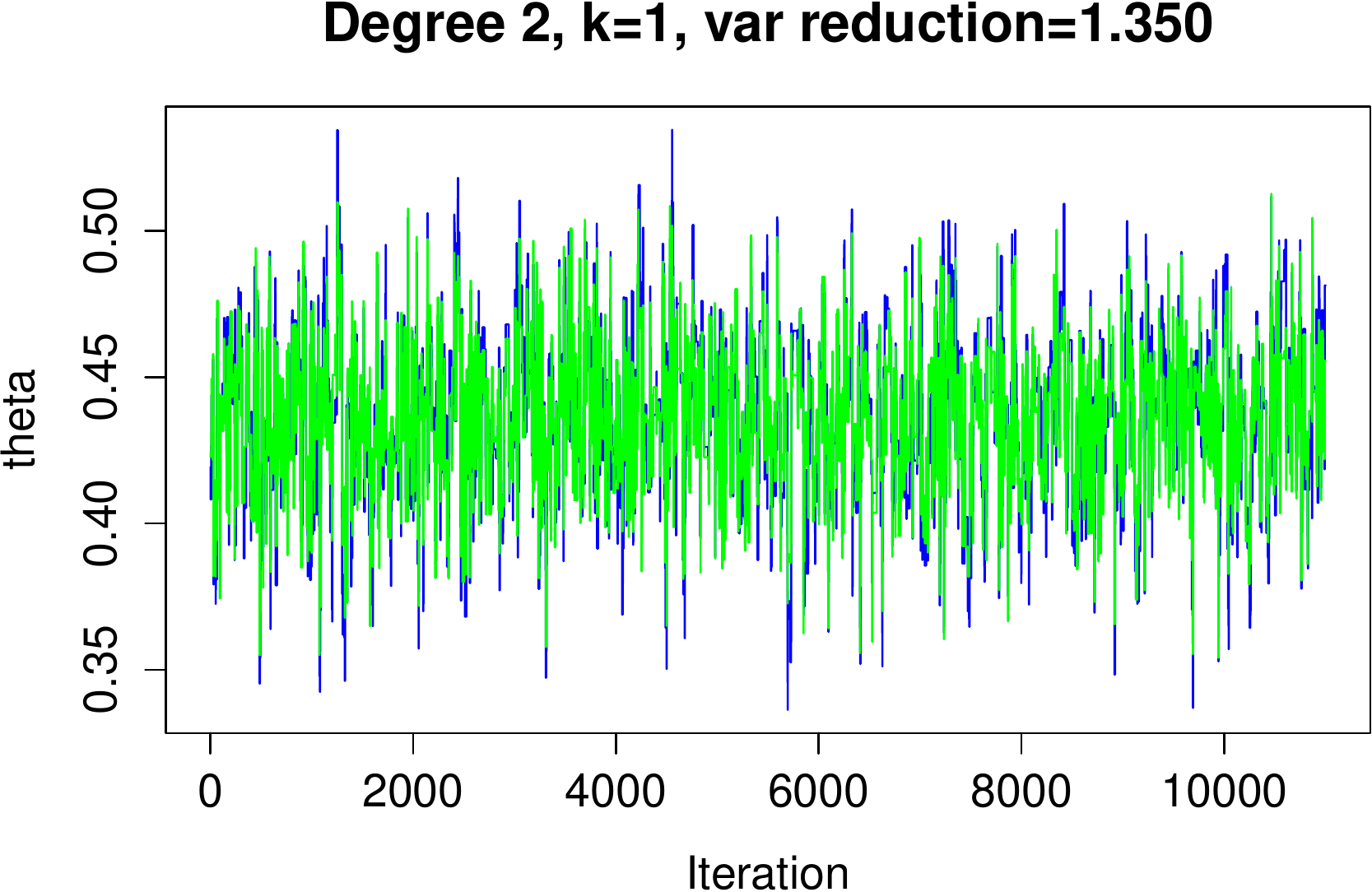} \\
  \includegraphics[width=.48\textwidth]{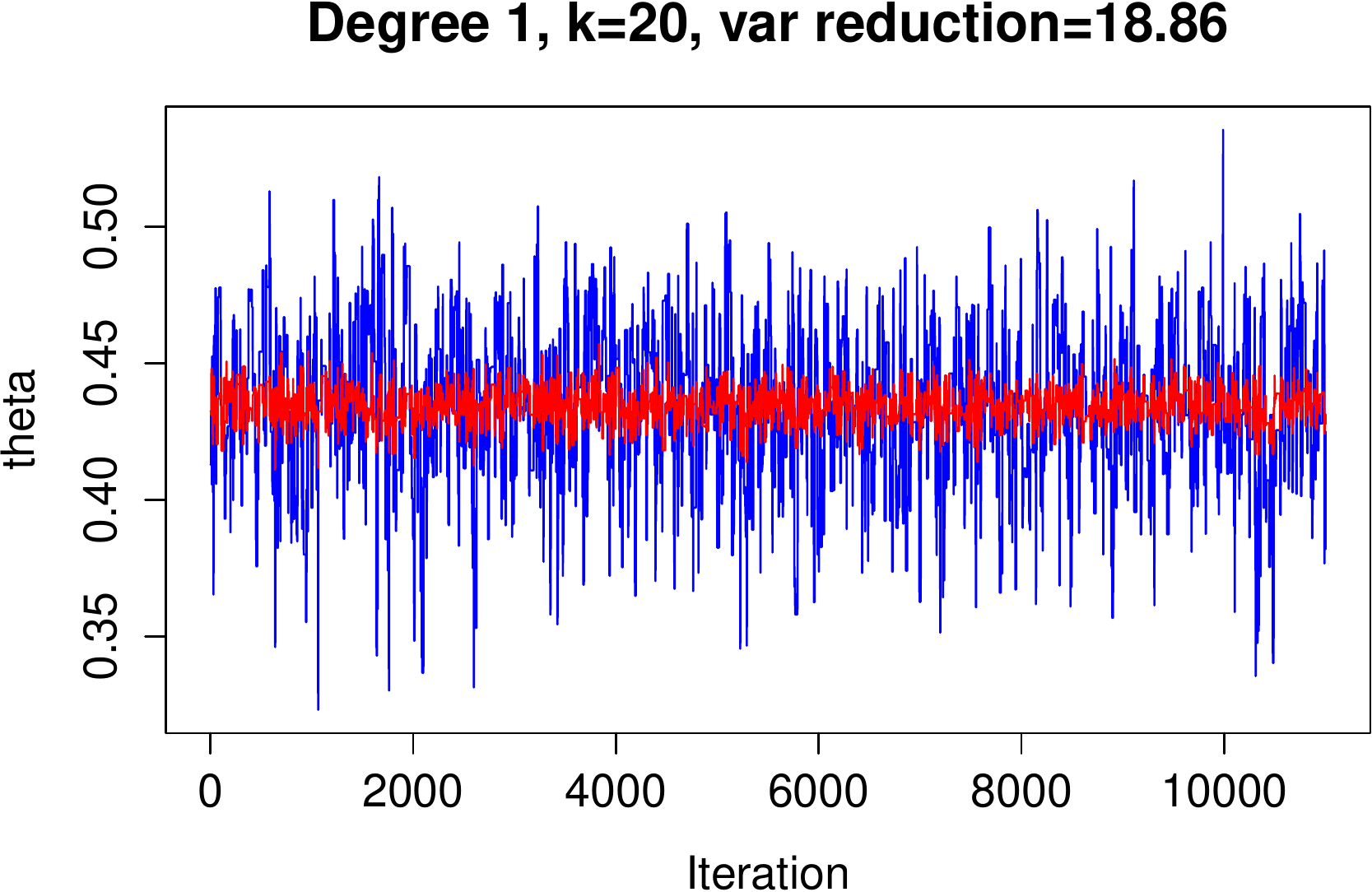}
  \includegraphics[width=.48\textwidth]{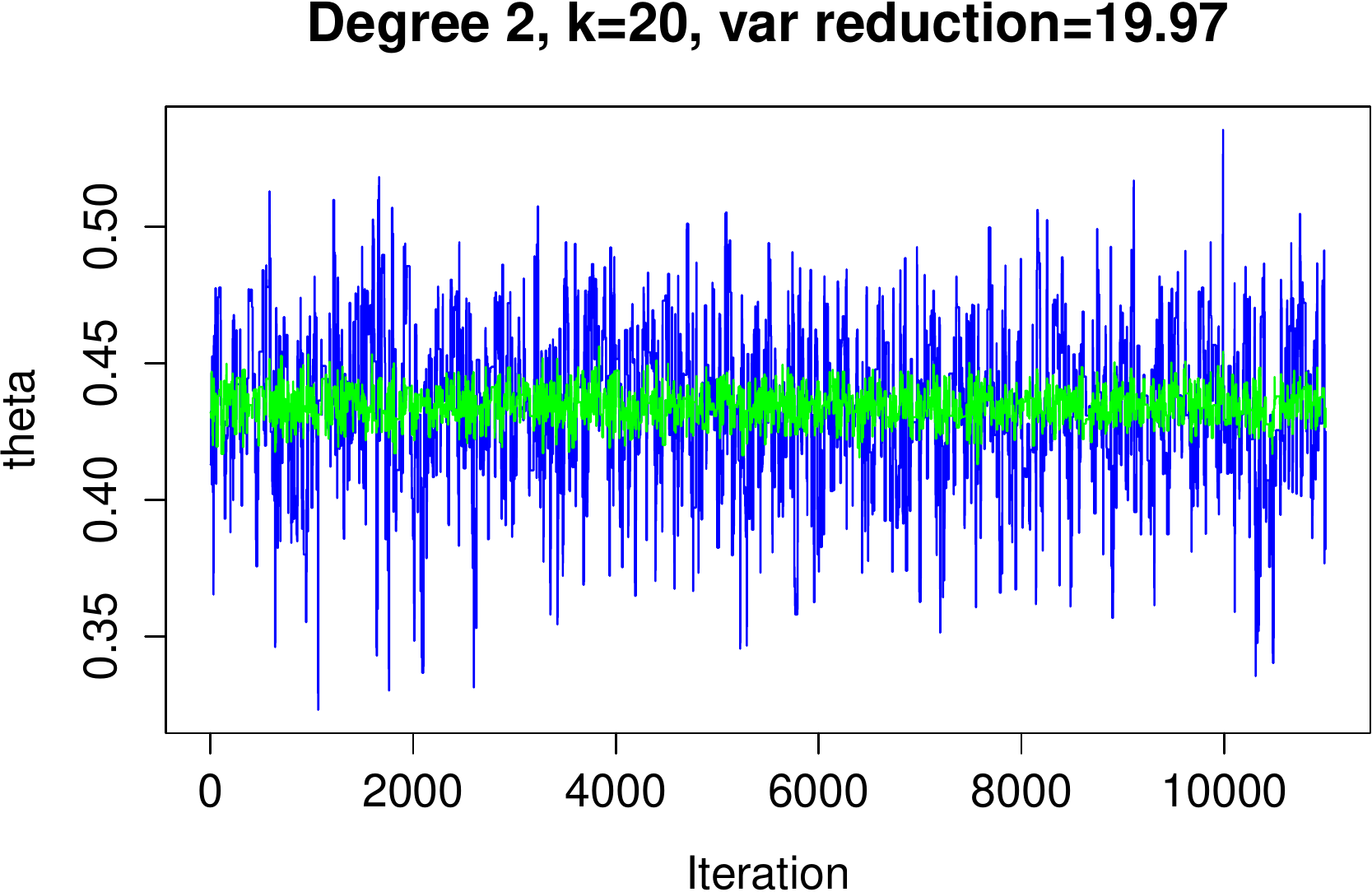} \\
  \includegraphics[width=.48\textwidth]{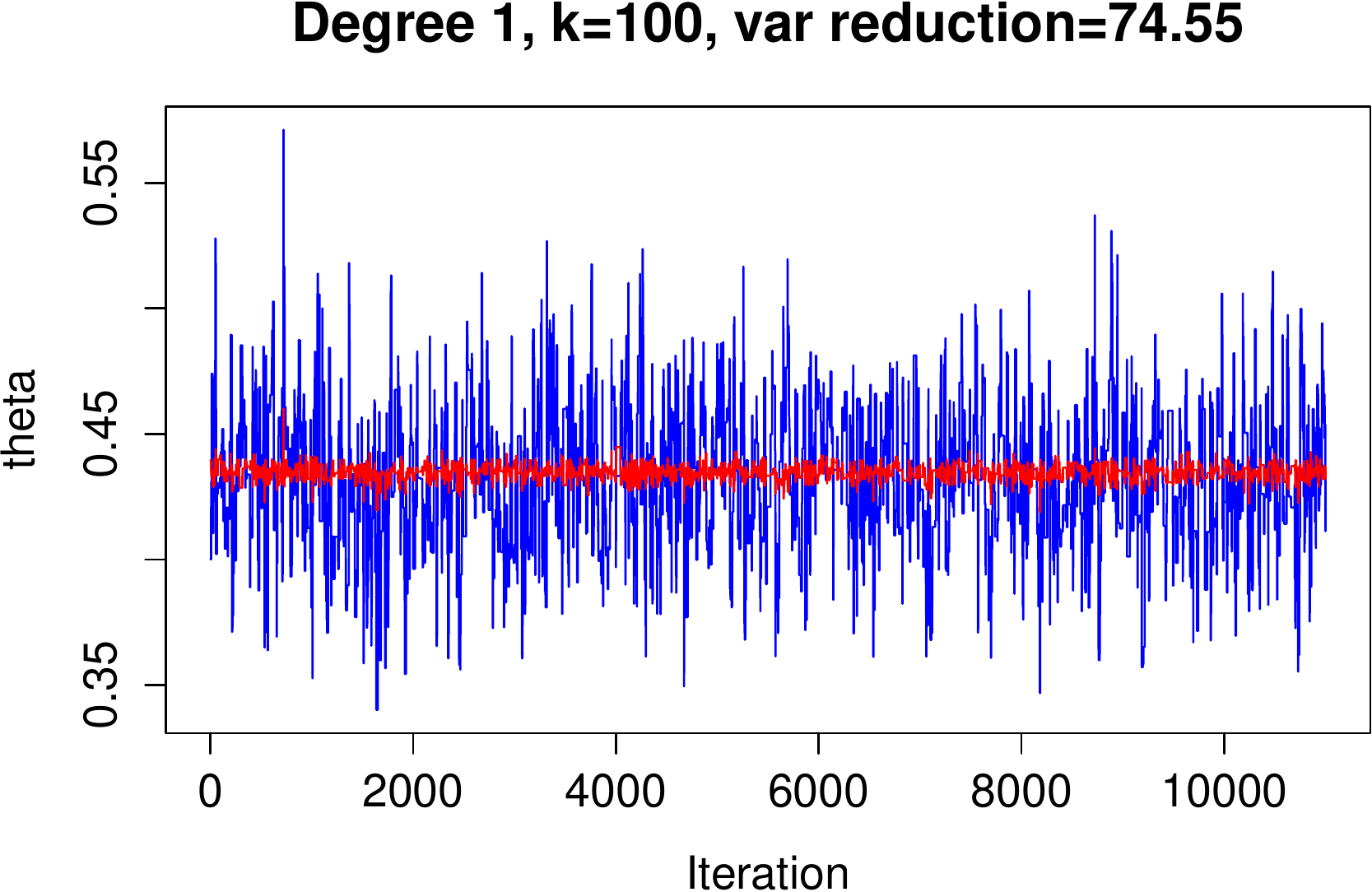} 
  \includegraphics[width=.48\textwidth]{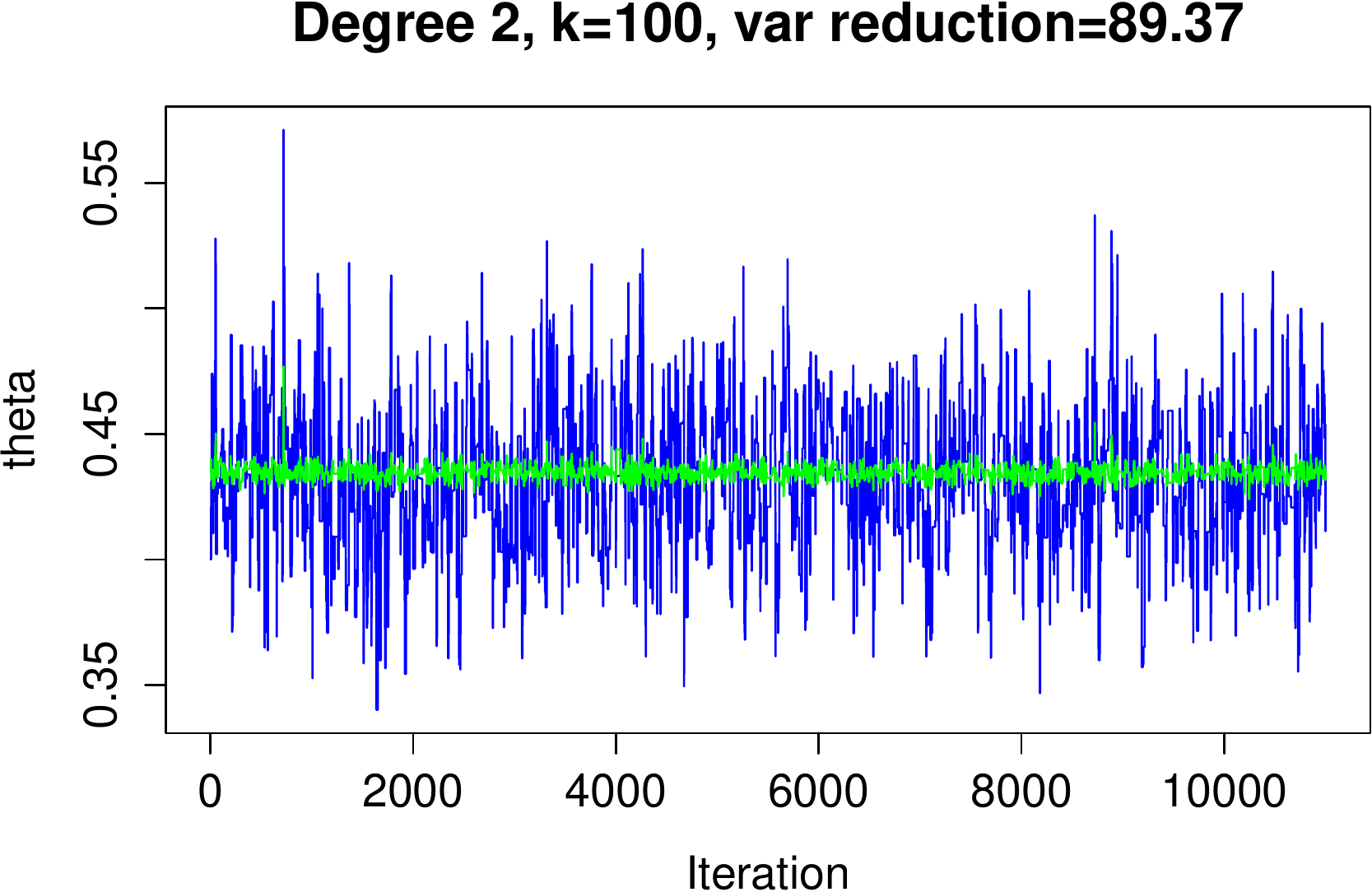} \\
  \includegraphics[width=.48\textwidth]{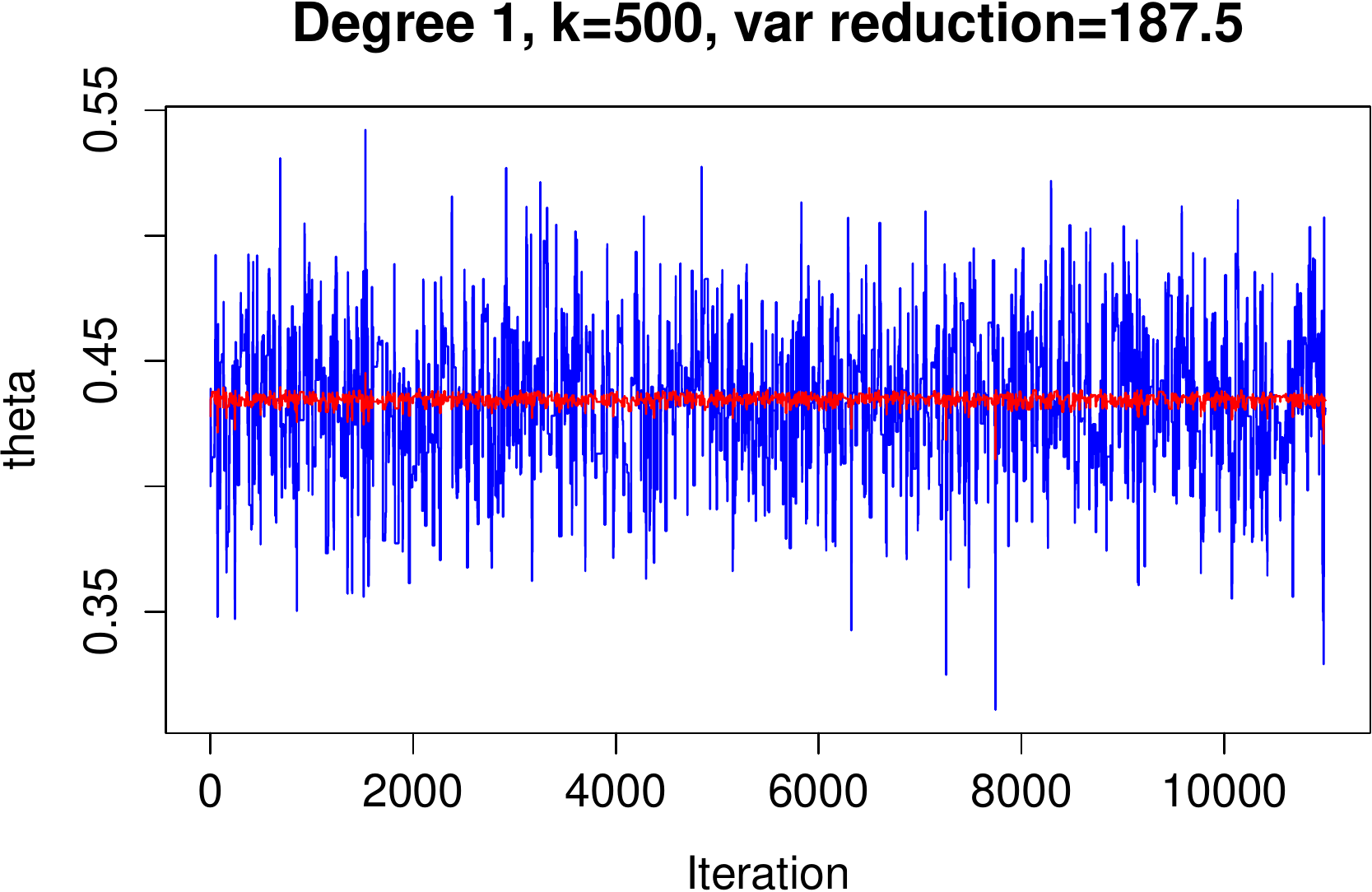} 
  \includegraphics[width=.48\textwidth]{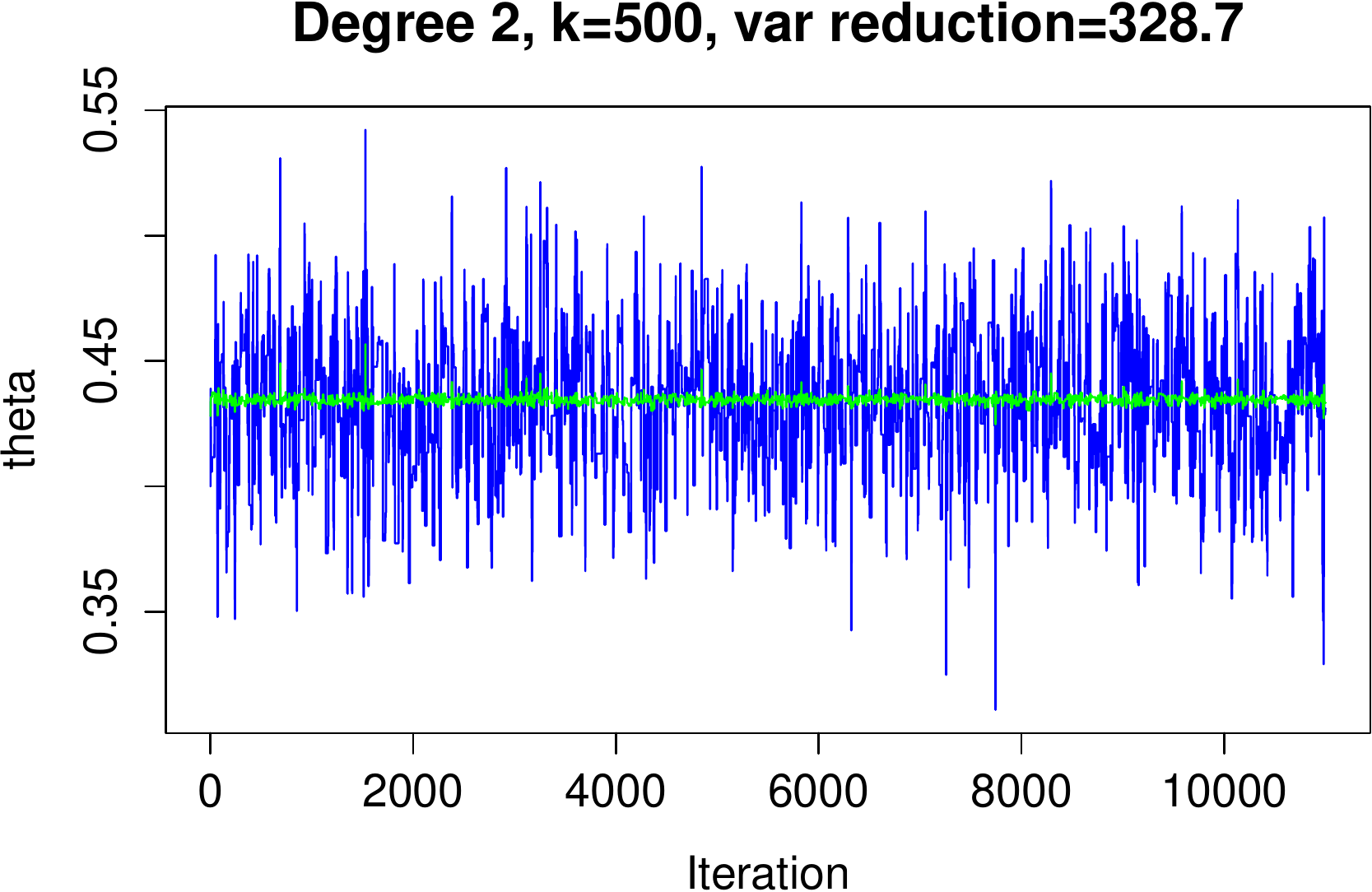} 
\caption{Ising model: As the number of forward-simulations, $K$, increases, the precision of the controlled estimate of the posterior mean for $\theta$ increases.
The degree-two polynomial yields greater precision compared to the degree-one polynomial, particularly for larger values of $K$.}
\label{fig:tempcomparison}
\end{figure}
\FloatBarrier

Fig. \ref{fig:tempcomparison} displays the MCMC trace plots, obtained using the exchange algorithm, for $g(\theta)=\theta$ (blue) and the 
RV
version $g(\theta) + \hat{\bm{\phi}}^T\bm{m}(\theta,\hat{u})$.
Trace plots are presented for increasing values of $K \in \{1,20,100,500\}$ and using degree-one (red) and degree-two (green) polynomials.\footnote{For convenience, forward-simulation was performed on a single core using a Gibbs sampler with $1,000$ burn-in iterations. 
A sample of size $K$ were collected from this chain at a lag of $500$ iterations in order to ensure that dependence between samples was negligible. 
This accurately mimics the setting of independent samples that corresponds to performing multiple forward-simulations in parallel. }
For $K=1$ we observe little difference between controlled (i.e. using RV control variates) and uncontrolled 
trajectories, suggesting that 
RV
control variates do not justify the additional coding effort in the case of serial computation. However it is evident that as $K$ increases, the Monte Carlo variance of the controlled trajectory decreases; indeed when $K = 500$ the variance is dramatically reduced compared to the (uncontrolled) MCMC samples of $\theta$.
These findings are summarised in Table \ref{tab:comparison}.
Additionally, we find that degree-two polynomials offer a substantial improvement over degree-one polynomials in terms of variance reduction, but that this is mainly realised for larger values of $K$.
These results present a powerful approach to exploit multi-core processing to deliver a real-time acceleration in the convergence of MCMC estimators.

\begin{table}[t!]
	\begin{center}
		\begin{tabular}{l l l l l}
			\hline\hline
			 \qquad & $K=1$ & $K=20$ & $K=100$ & $K=500$  \\
			\hline 
			$\hat{\mu}$ & $0.4340$ & 0.4345 & 0.4322 & 0.4340\\
                       $\hat{\mu}_1$ & 0.4351 & 0.4345 & 0.4347 & 0.4346 \\
                       $\hat{\mu}_2$ & 0.4351 & 0.4344 & 0.4346 & 0.4346\\
                       $R = \mathbb{V}[\hat{\mu}] / \mathbb{V}[\hat{\mu}_1]$  & 1.349 & 18.86 & 74.55 & 187.5\\
                       $R = \mathbb{V}[\hat{\mu}] / \mathbb{V}[\hat{\mu}_2]$ & 1.350 & 19.97 & 89.37 & 328.7 \\
			\hline
		\end{tabular}
	\end{center}
	\caption{Ising model: As the number of forward-simulations, $K$, used to estimate the score function increases, the reduction in variance becomes more substantial.
	Here $\hat{\mu}$ is the standard Monte Carlo estimate, $\hat{\mu}_1$ is the reduced-variance estimate using degree-one polynomials and $\hat{\mu}_2$ 	is the reduced-variance estimate using degree-two polynomials. 
	Brute-force calculation produces a value $\mu = 0.43455$ for this example.
	[Variances were estimated with respect to empirical means.]
	} 
	\label{tab:comparison}
\end{table}

\subsection{Example 3: Exponential random graph models}

In the experiment below we consider the \emph{Gamaneg} network \citep{Read}, displayed in Fig. \ref{fig:gamaneg}, that consists of $n=16$ sub-tribes of the Eastern central highlands of New Guinea. 
In this graph an edge represents an antagonistic relationship between two sub-tribes.
Here we consider an ERG model 
as defined in Sec. \ref{intro}, Example 2,
with 
$k=2$, i.e.
two sufficient statistics, where $s_1(y)$ counts the total number of observed edges and the two-star statistic $s_2(y)$ is also as defined in Sec. \ref{intro} above. 
Here the parameters $\theta_1, \theta_2$ control the propensity of edges and two-star configurations, respectively, in the network. 
Positive values of $\theta_1$ and $\theta_2$ tend to lead to, respectively, over-representation of
edges and two-star configurations in networks realised from the likelihood.
The prior distributions for $\theta_1$ and $\theta_2$ were both set to be independent $N(0,5^2)$, from which 
it
follows that the boundary condition of Lemma \ref{conditions} is satisfied.
This is a benchmark dataset that has previously been used to assess Monte Carlo methodology \citep{Friel}, making it well-suited to our purposes.

\begin{figure}[t!]
 \begin{center}
 \includegraphics[width=10cm]{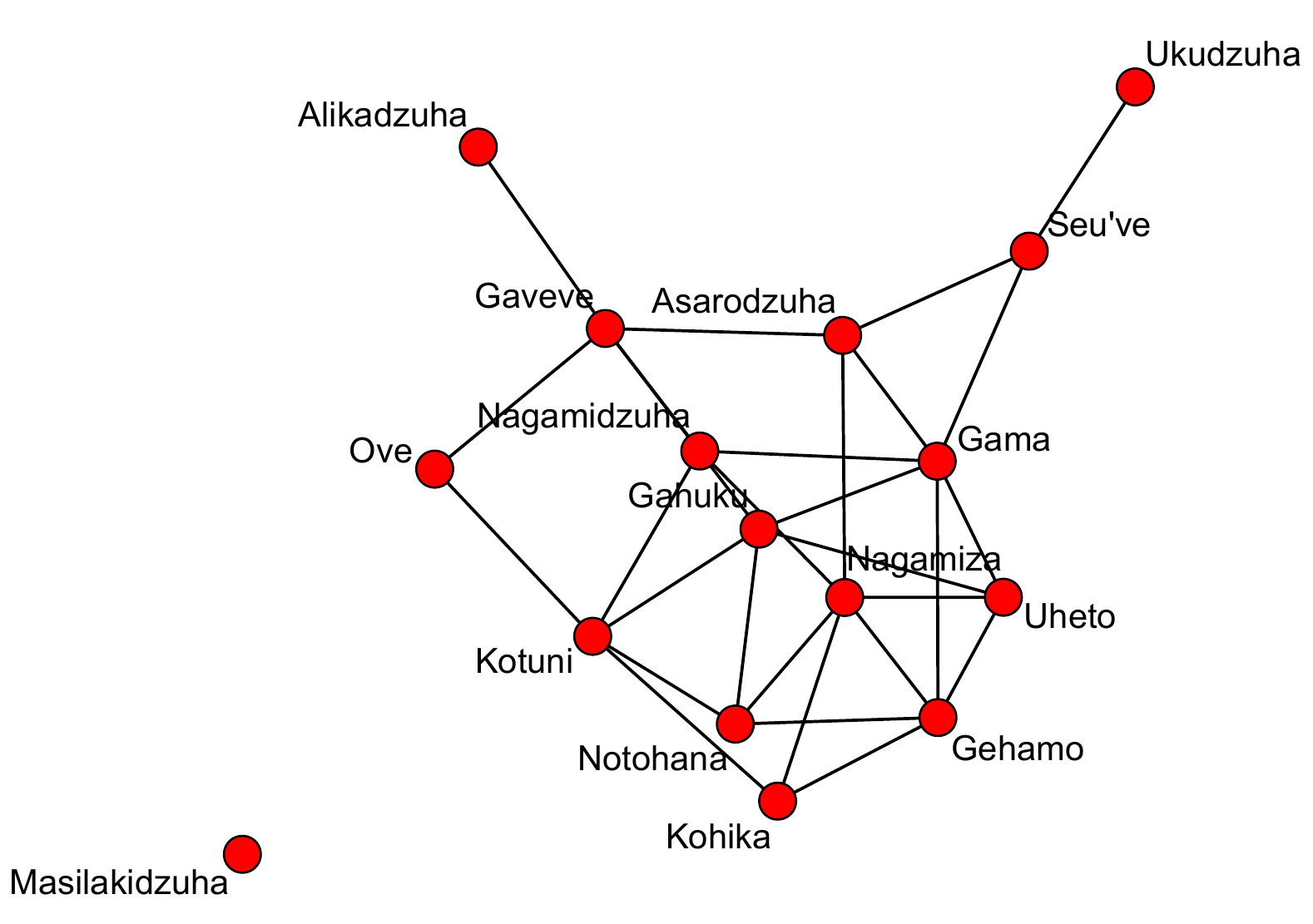}
\end{center}
\caption{Gamaneg graph. The vertices represent $16$ sub-tribes of the Eastern central highlands of New Guinea and edges represent an antagonistic relationship between two sub-tribes.}
\label{fig:gamaneg}
\end{figure}

Again we focus on the challenge of estimating the posterior mean $\bm{\mu} = \mathbb{E}_{\bm{\theta}|\bm{y}}[\bm{\theta}]$, in this case performing independent estimation with $g(\bm{\theta}) = \theta_j$ for $j=1, 2$.
Recently \cite{cai:fri11}, \cite{cai:fri14} developed Bayesian methodology for this model, based on the exchange algorithm, that can be directly utilised for the RV framework developed in this paper.  
The exchange algorithm was run for $I = 11,000$ iterations, where at each iteration $K=500$ forward-simulations were used to estimate 
the score.\footnote{For convenience, the forward-simulation step was achieved using a Gibbs sampler where a burn-in phase of $1,000$ iterations. We drew $K$ samples from this chain at a lag of $1,000$ iterations.
This accurately mimics the setting of independent samples that corresponds to performing multiple forward-simulations in parallel.} 
Fig.~\ref{fig:erg_plots} illustrates that a variance reduction of about 20 times is possible using a degree-two polynomial for each of the two components of the parameter vector.
From the uncontrolled trajectories it is difficult to comment on the relative posterior means $\mu_1$ and $\mu_2$ of $\theta_1$ and $\theta_2$ respectively, but from the controlled trajectories it is visually clear that we have $\mu_1 < \mu_2 < 0$.
This suggests that posterior predictions of network structure typically contain more two-stars than edges.

We note that \cite{Caimo} recently proposed the use of delayed rejection to reduce autocorrelation in the exchange algorithm for ERG models, demonstrating an approximate two-fold variance reduction; 
the delayed rejection exchange algorithm is fully
compatible with our methodology and, if combined, should yield a further reduction in variance.

\begin{figure}[t!]
\centering
  \includegraphics[width=.44\textwidth]{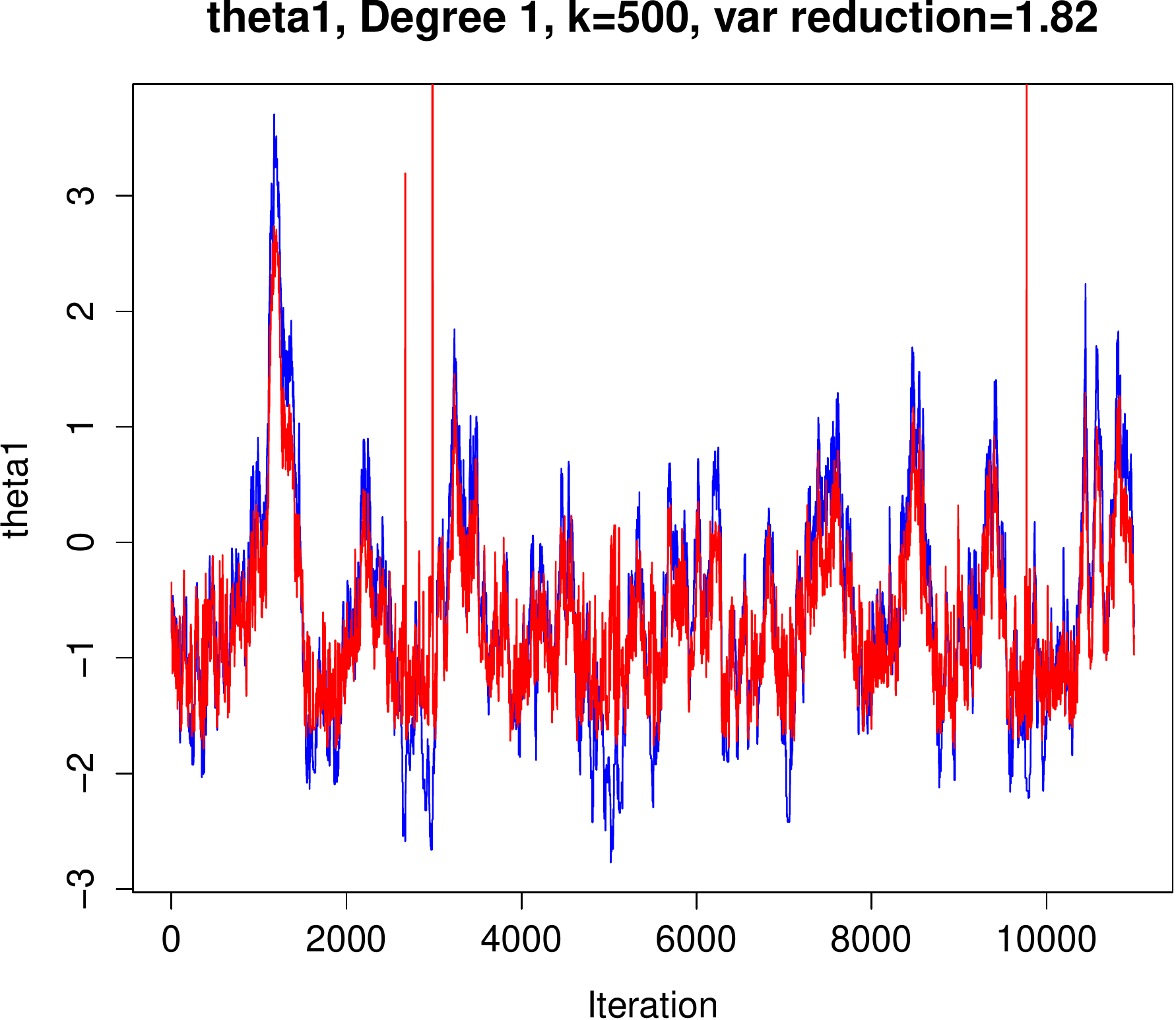} 
  \includegraphics[width=.44\textwidth]{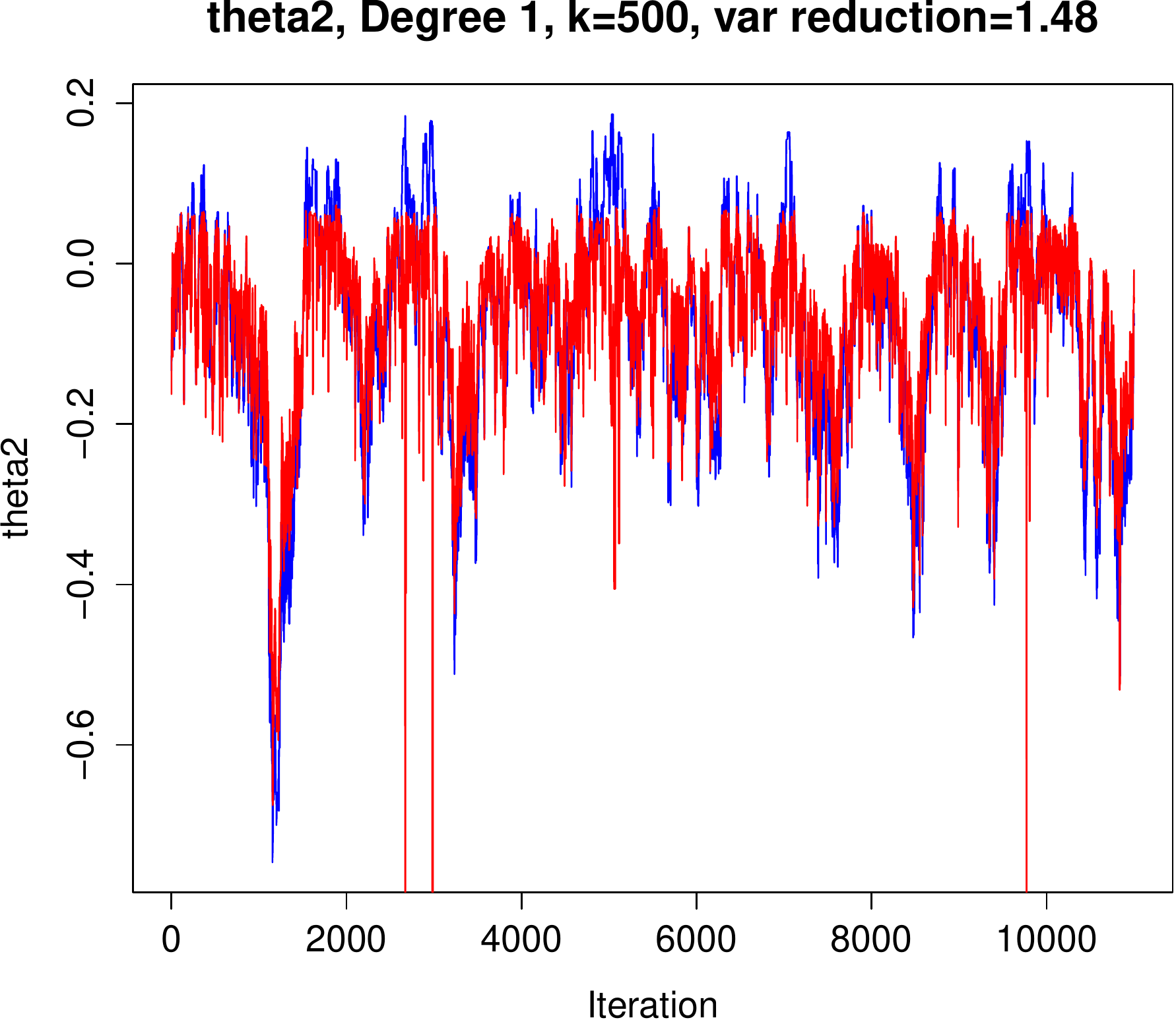} \\
  \includegraphics[width=.44\textwidth]{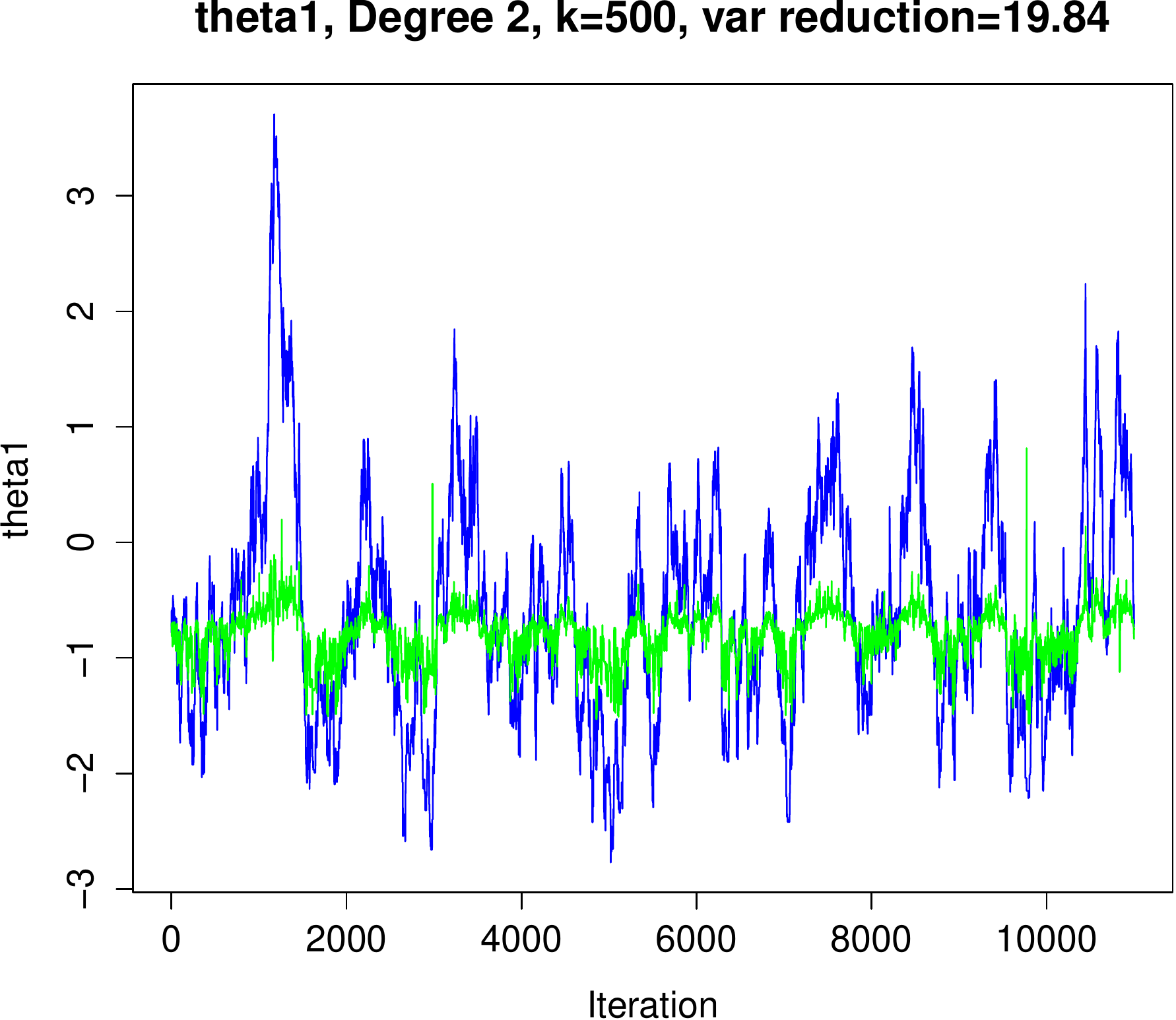} 
  \includegraphics[width=.44\textwidth]{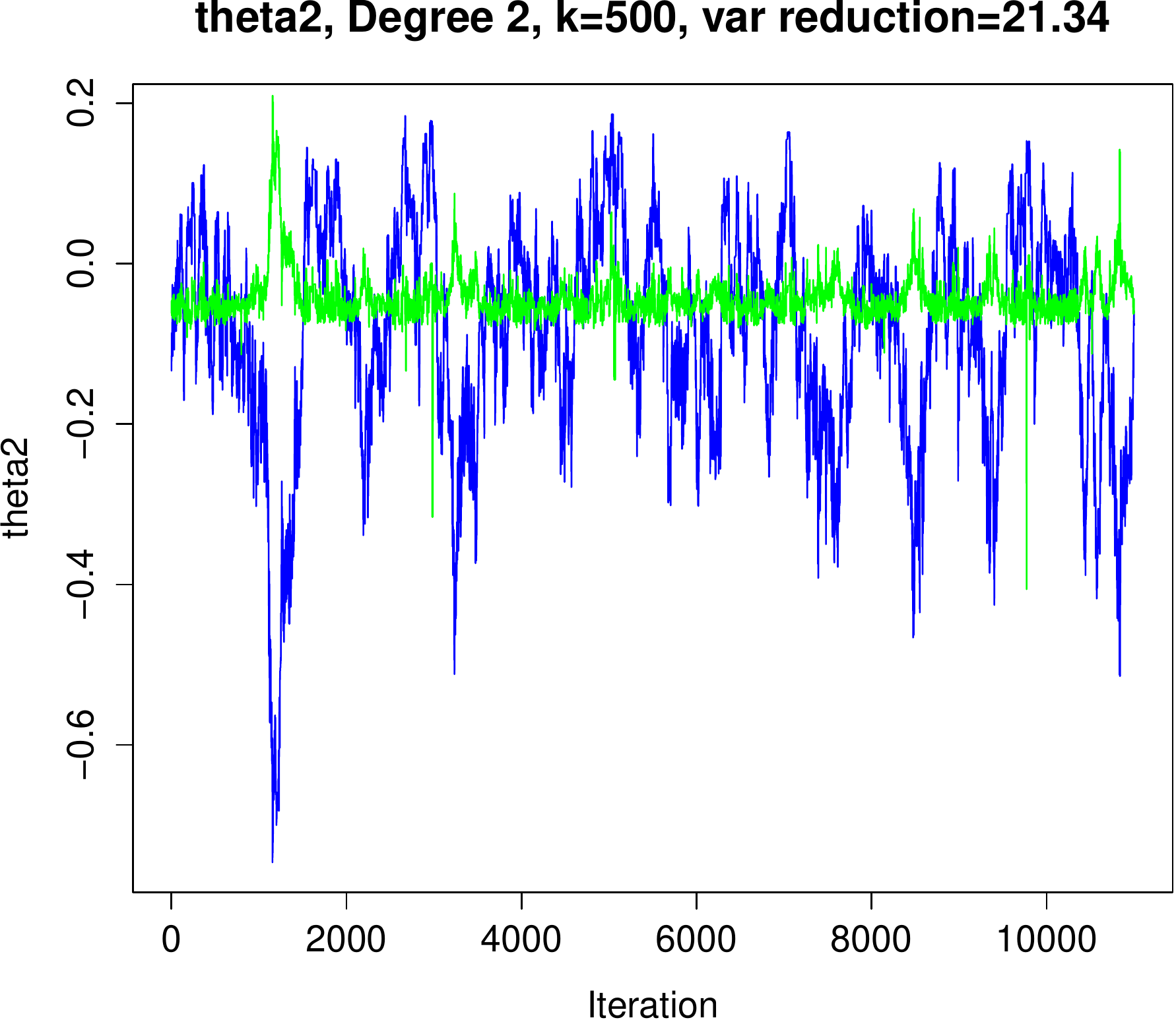} 
\caption{Exponential random graph model: The top row displays the trace plot for $\theta_1$ and $\theta_2$ in uncontrolled (blue) and controlled 
versions for a degree-one (red) polynomial, while the bottom row is similar but for a degree-two (green) polynomial.}
\label{fig:erg_plots}
\end{figure}

\FloatBarrier

\subsection{Example 4: Nonlinear stochastic differential equations}

For our final example we consider performing Bayesian inference for a system of nonlinear stochastic differential equations (SDEs)
as defined in  Sec. \ref{intro}, Example 4, Eqn. \ref{general sde}.
This problem is well-known to pose challenges for Bayesian computation and recent work in this direction includes \citep{Beskos,Golightly}.
We estimate the score using
\begin{eqnarray}
\nabla_{\bm{\theta}} \log p(\bm{\theta}|\bm{y}) \approx \frac{1}{K} \sum_{k=1}^K \nabla_{\bm{\theta}} \log p(\bm{\theta},\bm{x}^{(k)}|\bm{y}) \label{score approx}
\end{eqnarray}
where $\bm{x}^{(k)}$ are independent samples from $p(\bm{x}|\bm{\theta},\bm{y})$.
Such samples can be generated using MCMC techniques and in this paper we make use of a Metropolis-Hastings sampler with ``diffusion bridge'' proposals \citep{Fuchs}.
Note that since $ p(\bm{\theta},\bm{x}|\bm{y}) \propto  p(\bm{y},\bm{x} | \bm{\theta}) p(\bm{\theta}) = p(\bm{x}|\bm{\theta}) p(\bm{\theta})$, 
we have that 
\begin{eqnarray}
\nabla_{\bm{\theta}} \log p(\bm{\theta},\bm{x}|\bm{y}) = \nabla_{\bm{\theta}} \log p(\bm{\theta}) + \nabla_{\bm{\theta}} \log p(\bm{x} | \bm{\theta}).
\end{eqnarray}
Direct calculation shows that, assuming $\bm{\beta}$ is invertible,
\begin{eqnarray}
\nabla_{\theta_j} \log p(\bm{X} | \bm{\theta}) = \sum_{i=2}^T \begin{array}{l} -\frac{1}{2} \text{tr}(\bm{\beta}_{i}^{-1} \nabla_{\theta_j} \bm{\beta}_{i}) + (\nabla_{\theta_j}\bm{\alpha}_{i})^T \bm{\beta}_{i}^{-1} (\bm{X}_i - \bm{X}_{i-1} - \bm{\alpha}_{i} \delta t) \\
\; \; \; \; \; + \frac{1}{2\delta t} (\bm{X}_i - \bm{X}_{i-1} - \bm{\alpha}_{i} \delta t)^T \bm{\beta}_{i}^{-1} (\nabla_{\theta_j} \bm{\beta}_{i}) \bm{\beta}_{i}^{-1} (\bm{X}_i - \bm{X}_{i-1} - \bm{\alpha}_{i} \delta t) \end{array}
\end{eqnarray}

\begin{figure}[t!]
\centering
\includegraphics[width=\textwidth]{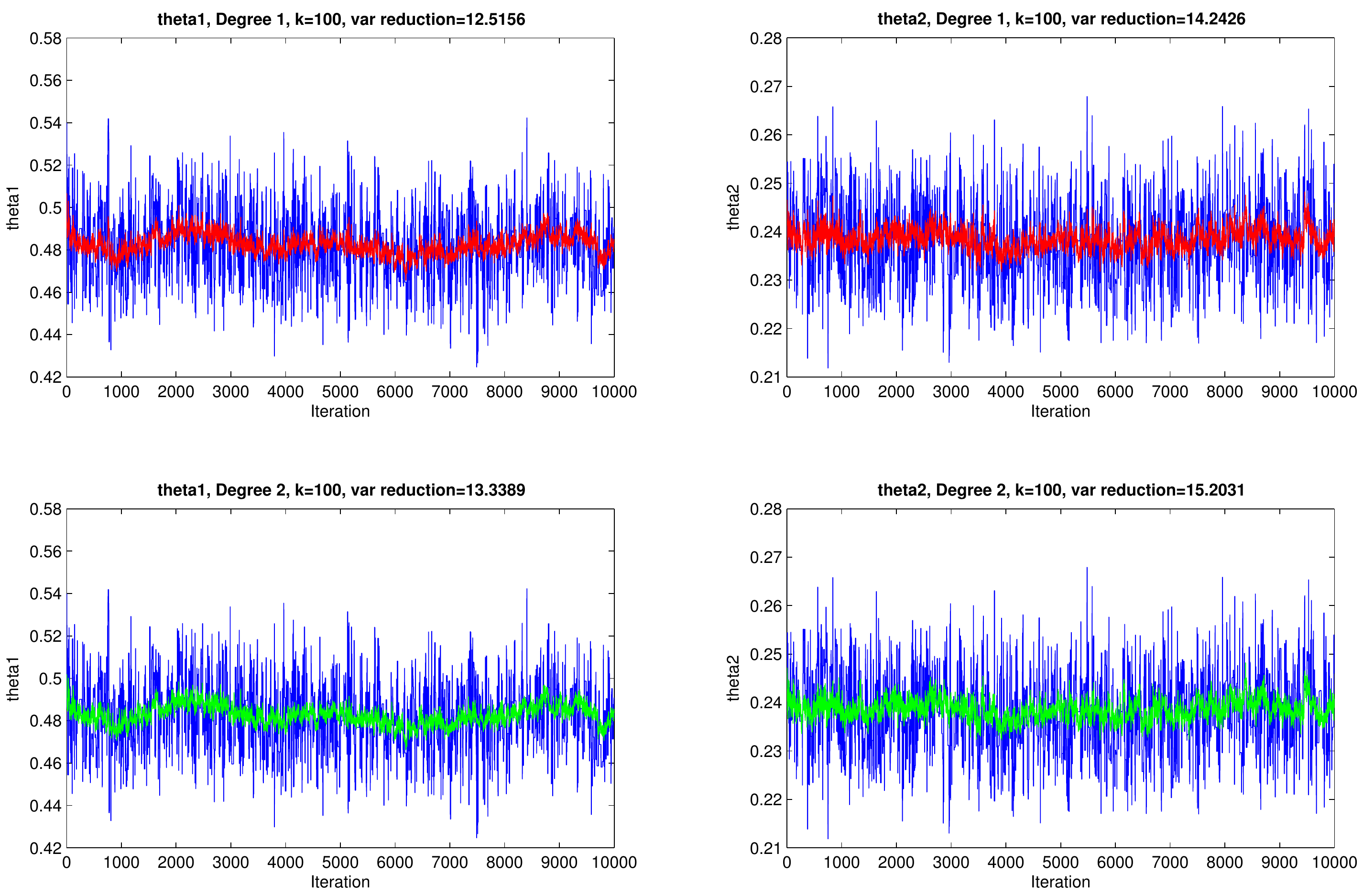}
\caption{SIR model. The top row displays the trace plots for $\theta_1$ and $\theta_2$ in uncontrolled (blue) and controlled (red) versions for a degree-one polynomial. The bottom row is similar, but for degree-two polynomials.}
\label{fig:SIR}
\end{figure}

Consider the specific example of the Susceptible-Infected-Recovered (SIR) model from epidemiology.
Letting $X_1$, $X_2$ denote respectively the proportions of susceptible and infected individuals in a population, modelled as continuous random variables, the SIR model has a stochastic representation given by
\begin{eqnarray}
\bm{\alpha}(\bm{X};\bm{\theta}) = \left[\begin{array}{c} -\theta_1 X_1X_2 \\ \theta_1 X_1X_2 - \theta_2 X_2 \end{array} \right], \; \; \; \; \; \bm{\beta}(\bm{X};\bm{\theta}) = \frac{1}{N} \left[\begin{array}{cc} \theta_1X_1X_2 & -\theta_1X_1X_2 \\ -\theta_1X_1X_2 & \theta_1X_1X_2 + \theta_2X_2 \end{array}\right]
\end{eqnarray}
where $N$ is a fixed population size and the rate parameters $\bm{\theta} \in [0,\infty)^2$ are unknown.
We assess our methodology by attempting to estimate the posterior mean of $\bm{\theta}$, taking $g(\bm{\theta}) = \theta_j$ for $j = 1,2$ in turn.
Here each $\theta_j$ was assigned an independent Gamma prior with shape and scale hyperparameters both equal to 2.
This prior vanishes at the origin and has exponentially decaying tails, so that the boundary condition of Lemma \ref{conditions} is satisfied by all polynomials.
Data were generated using the initial condition $\bm{X}_0 = [0.99,0.01]$, population size $N=1,000$ and parameters $\bm{\theta} = [0.5,0.25]$.
Observations were made at 20 evenly spaced intervals in the period from $t=0$ to $t=35$.
Five latent data points were introduced between each observed data point, so that the latent process has dimension $2 \times (20-1) \times 5 = 190$.
At each Monte Carlo iteration we sampled $K=100$ realisations of the latent data process $\bm{X}^u$.

Fig. \ref{fig:SIR} demonstrates that a variance reduction of about 12-14 times is possible using degree-one polynomials and 13-15 times using degree-two polynomials.
Again, these results highlight the potential to exploit multi-core processing for variance reduction in Monte Carlo methodology.

\section{Conclusions}

In this paper we have shown how repeated forward-simulation enables reduced-variance estimation in models that have intractable likelihoods.
The examples that we have considered illustrate the value of the proposed methodology in spatial statistics, social network analysis and inference for latent data models such as SDEs. 
The RV methodology provides a straight-forward means to leverage multi-core architectures for Bayesian estimation, that compliments recent work for MCMC in this direction by \cite{Alquier,Angelino,Bardenet,Calderhead,Korattikara,Maclaurin}.
Our theoretical analysis revealed that the number $K$ of forward-simulations should be taken equal to the number of cores in order to provide the optimal variance reduction per unit (serial) computation.
Furthermore, it was shown that the proposed
RV
estimator converges to the (intractable) ZV estimator of \cite{Mira} as the number $K$ of cores becomes infinite.
Our theoretical findings are supported by empirical results on standard benchmark datasets, that demonstrate a substantial variance reduction can be realised in practice.
In particular, results for the Ising model demonstrate that a 200-300 times variance reduction can be achieved by exploiting a $K=500$ core architecture.
More generally, our work shows that {\it it may not be necessary to parallelise the sampling process itself}; the potential of massively multi-core architectures can be harnessed in post-processing MCMC samples using control variates.

To conclude, we suggest interesting directions for further research:
\begin{itemize}
\item The approach that we pursued was a post-processing procedure that does not require modification to the MCMC sampling mechanism itself.
However an interesting possibility would be to also use the output of forward-sampling to construct gradient-based proposal mechanisms for the underlying MCMC sampler, following recent work in this direction by \cite{Alquier,Nemeth}.
This would retain the inherently parallel nature of the simulation procedure whilst yielding useful approximate Monte Carlo schemes that converge to an ``idealised'' (i.e. marginal) sampler as the number of forward-simulations, $K$, increases. In particular, our procedure can be implemented within the various schemes developed in \cite{Alquier,Nemeth} without any additional computational cost. 
Stochastic approximation of the score function was also recently considered by \cite{Atchade2} in the context of designing proximal gradient algorithms. Our work therefore combines to illustrate the wide range of statistical models for which such an approach could prove practically useful.
Alternative approaches to handling Type I intractability, such as Approximate Bayesian Computation \citep{Marjoram} typically also require a forward-simulation step and thus could also be embedded within our framework. 
\item In terms of statistical efficiency it would be interesting to extend the reduced-variance methodology to the non-parametric setting recently considered by \cite{Oates2}, that provides a mechanism to learn a suitable trial function $P(\bm{\theta})$ that need not be polynomial, leading (in some cases) to improved convergence rates.
A second interesting possibility would be to allow the number of forward-simulations, $K$ to depend upon the current state $\bm{\theta}$; in this way fewer simulations could be performed when it is expected that the score estimate $\hat{\bm{u}}(\bm{\theta}|\bm{y})$ is likely to have a low variance.
A third direction would be to move beyond independent estimation of the score $\bm{u}(\bm{\theta}|\bm{y})$ for each value of $\bm{\theta}$; here non-parametric regression techniques could play a role and this should yield further reductions in estimator variance, which again has a close analogy with \cite{Oates2}.

\item Finally, a referee suggested the intruiging possibility to expolit control variates for reduced-variance density estimation.
Specifically, we would take $g_{\bm{\theta}^*}(\bm{\theta}) = K_h(\bm{\theta}^*-\bm{\theta})$ for a bandwidth-$h$ kernel smoother estimate for the posterior density at a point $\bm{\theta}^* \in \Theta$, such that $\mathbb{E}_{\bm{\theta}|\bm{y}}[g_{\bm{\theta}^*}(\bm{\theta})] \rightarrow p(\bm{\theta}^*|\bm{y})$ as $h \rightarrow 0$. This is a direction that we are keen to explore further.
\end{itemize}

Of course, in the era of big data, as statisticians are increasingly interested in analysing larger datasets an immediate challenge is the issue of dealing with 
intractable likelihoods, due to the volume of data. Moreover, one would anticipate that statistical methodology will focus on the development of inferential algorithms that exploit modern 
multi-core computer architectures. Both of these will inevitably lead to further development and extensions of the methodology described in this paper. 

\bibliographystyle{ba}

\begin{acknowledgement}
The authors are grateful for the constructive feedback they received from the Editors and Reviewers at Bayesian Analysis.
NF was supported by the Science Foundation Ireland [12/IP/1424].
The Insight Centre for Data Analytics is supported by Science Foundation Ireland [SFI/12/RC/2289].
AM was supported by the Swiss National Science Foundation [CR12I1-156229].
CJO was supported by the EPSRC Centre for Research in Statistical Methodology [EP/D002060/1]. 
The authors thank Mark Girolami for helpful discussions.
\end{acknowledgement}

\section*{Appendix}

Below we prove Lemma \ref{one} from the Main Text. First we require a technical result:

\begin{lemma} \label{technical}
Write $\rho(\infty) = \text{\emph{Corr}}_{\bm{\theta}|\bm{y}}[g(\bm{\theta}),h(\bm{\theta}|\bm{y})]$.
There exists $C \in (0,\infty)$ such that
\begin{eqnarray}
\rho(K)^2 = \left( \frac{1}{\rho(\infty)^2} + \frac{C}{K} \right)^{-1}. \label{rho identity}
\end{eqnarray}
\end{lemma}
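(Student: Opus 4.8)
The plan is to exploit the fact that, for a fixed trial function $P$, the control variate $\hat h(\bm\theta|\bm y)$ is a \emph{scalar} random variable obtained from the exact ZV control variate $h(\bm\theta|\bm y)$ by adding a conditionally mean-zero estimation error whose variance scales like $1/K$. First I would write $\hat{\bm u}(\bm\theta|\bm y) = \bm u(\bm\theta|\bm y) + \bm\xi$, where $\bm\xi := \hat{\bm u} - \bm u$. Because $\hat{\bm u}$ is an average of $K$ conditionally i.i.d.\ terms with the correct conditional mean (Type I via Eqn.~\ref{final typeI}, Type II via Eqn.~\ref{FI}), one has $\mathbb{E}[\bm\xi \mid \bm\theta] = \bm 0$ and $\mathrm{Cov}[\bm\xi\mid\bm\theta] = \tfrac1K \Sigma(\bm\theta)$ for a fixed (independent of $K$) conditional covariance $\Sigma(\bm\theta)$. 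Substituting into Eqn.~\ref{h} gives $\hat h = h + \zeta$ with the scalar error $\zeta := \nabla_{\bm\theta}P(\bm\theta)\cdot\bm\xi$, which satisfies $\mathbb{E}[\zeta\mid\bm\theta]=0$ and $\mathbb{V}[\zeta\mid\bm\theta] = \tfrac1K\,\nabla P^\top\Sigma\,\nabla P$.

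Next I would compute the two second moments that enter the correlation. Since $g$ and $h$ are functions of $\bm\theta$ alone and $\zeta$ has zero conditional mean, the tower property kills the cross terms: $\mathrm{Cov}_K[g,\hat h] = \mathrm{Cov}_{\bm\theta|\bm y}[g,h] =: \gamma$ is independent of $K$, and $\mathbb{E}[h\zeta]=0$, so that $\mathbb{V}_K[\hat h] = \mathbb{V}_{\bm\theta|\bm y}[h] + D/K$ with $D := \mathbb{E}_{\bm\theta|\bm y}[\nabla P^\top\Sigma\,\nabla P] \ge 0$. Both identities take the same form for Type I and Type II, the only difference being the object ($\bm Y_k$ or $\bm X_k$) over which the conditional covariance $\Sigma$ is computed.

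Assembling these gives $\rho(K)^2 = \gamma^2 / \{\mathbb{V}_{\bm\theta|\bm y}[g]\,\mathbb{V}_K[\hat h]\}$, and since $\rho(\infty)^2 = \gamma^2/\{\mathbb{V}_{\bm\theta|\bm y}[g]\,\mathbb{V}_{\bm\theta|\bm y}[h]\}$, taking reciprocals yields $1/\rho(K)^2 = 1/\rho(\infty)^2 + C/K$ with $C := \mathbb{V}_{\bm\theta|\bm y}[g]\,D/\gamma^2$, which rearranges to Eqn.~\ref{rho identity}. It then remains to argue $C\in(0,\infty)$: finiteness follows from finiteness of the relevant second moments, while positivity requires the mild non-degeneracy conditions $\gamma\neq 0$ (the control variate is genuinely correlated with the target, so $\rho(\infty)>0$) and $D>0$ (the score estimator carries nonzero variance in the direction $\nabla P$, which holds whenever $\Sigma(\bm\theta)\succ 0$ and $\nabla P\not\equiv\bm 0$).

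The step requiring the most care is the \emph{scope} of the claim. The clean affine-in-$1/K$ form holds precisely because, for a fixed $P$, $\hat h$ is a single scalar and the estimation noise enters additively, so that the cross-term cancellations are immediate. I expect the main obstacle to be resisting the temptation to optimise the coefficients $\bm\phi$ inside $\rho(K)$: with the $\bm\phi^*$ of Eqn.~\ref{optimal phi} one instead obtains a multiple correlation of the form $\rho(K)^2 = \bm v^\top(M_\infty + D/K)^{-1}\bm v/\mathbb{V}_{\bm\theta|\bm y}[g]$ (where $M_\infty$ and the matrix $D$ are the exact and noise covariances of $\bm m$, and $\bm v$ the fixed cross-covariance with $g$), whose reciprocal is affine in $1/K$ only when $D$ is aligned with $M_\infty$, and in general merely to leading order. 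I would therefore state and prove Lemma~\ref{technical} for a fixed trial function, which is exactly the regime validated empirically in Fig.~S2.
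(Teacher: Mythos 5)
Your proposal is correct and follows essentially the same route as the paper's own proof: decompose $\hat h = h + \epsilon$ with $\mathbb{E}[\epsilon\mid\bm\theta]=0$, use the tower property to kill the cross-covariances with $g$ and $h$, obtain $\mathbb{V}_K[\hat h]=\mathbb{V}_{\bm\theta|\bm y}[h]+C_1/K$, and take reciprocals. Your closing caveat about fixing the trial function $P$ matches what the paper's proof implicitly does (and your constant $C=\mathbb{V}_{\bm\theta|\bm y}[g]\,D/\gamma^2$ is in fact slightly more careful than the paper's, which drops the factor $\mathbb{V}_{\bm\theta|\bm y}[g]$ in its final line).
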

\begin{proof}
For Type I models write $\hat{h}(\bm{\theta},\bm{Y}) = h(\bm{\theta}) + \epsilon(\bm{\theta},\bm{Y})$ where $\bm{Y} = (\bm{Y}_1,\dots,\bm{Y}_K)$ and we suppress dependence on the data $\bm{y}$ in this notation.
It follows that the discrepancy between the reduced-variance and ZV control variates is given by
\begin{eqnarray}
\epsilon(\bm{\theta},\bm{Y}) = \nabla_{\bm{\theta}}P(\bm{\theta}) \cdot \left[ \frac{1}{K}\sum_{k=1}^K \bm{s}(\bm{Y}_k) - \mathbb{E}_{\bm{Y}'|\bm{\theta}}[\bm{s}(\bm{Y}')] \right].
\end{eqnarray}
Taking an analogous approach to Type II models we obtain
\begin{eqnarray}
\epsilon(\bm{\theta},\bm{X}) = \nabla_{\bm{\theta}}P(\bm{\theta}) \cdot \left[ \frac{1}{K}\sum_{k=1}^K \bm{u}(\bm{\theta},\bm{X}_k) - \mathbb{E}_{\bm{X}'|\bm{\theta},\bm{y}}[\bm{u}(\bm{\theta},\bm{X}')] \right].
\end{eqnarray}
Note that $\mathbb{E}_{\bm{Y}|\bm{\theta}}[\epsilon(\bm{\theta},\bm{Y})] = 0$ and hence 
\begin{eqnarray}
\mathbb{E}_{\bm{Y},\bm{\theta}|\bm{y}}[\epsilon(\bm{\theta},\bm{Y})] = \mathbb{E}_{\bm{Y}|\bm{\theta},\bm{y}}\mathbb{E}_{\bm{\theta}|\bm{y}}[\epsilon(\bm{\theta},\bm{Y})] = \mathbb{E}_{\bm{\theta}|\bm{y}} \mathbb{E}_{\bm{Y}|\bm{\theta}} [\epsilon(\bm{\theta},\bm{Y})] = 0, 
\end{eqnarray}
with an analogous result holding for Type II models.
Using these results we have that, for Type I models
\begin{eqnarray}
\mathbb{V}_{\bm{Y},\bm{\theta}|\bm{y}}[\epsilon(\bm{\theta},\bm{Y})] & = & \mathbb{E}_{\bm{Y},\bm{\theta}|\bm{y}}[\epsilon(\bm{\theta},\bm{Y})^2] \\
& = & \mathbb{E}_{\bm{\theta}|\bm{y}} \mathbb{E}_{\bm{Y}|\bm{\theta}}[\epsilon(\bm{\theta},\bm{Y})^2] = \mathbb{E}_{\bm{\theta}|\bm{y}} \mathbb{V}_{\bm{Y}|\bm{\theta}}[\epsilon(\bm{\theta},\bm{Y})] \\
& = & \mathbb{E}_{\bm{\theta}|\bm{y}}[K^{-1}\mathbb{V}_{\bm{Y}|\bm{\theta}}[\nabla_{\bm{\theta}} P(\bm{\theta}) \cdot \bm{s}(\bm{Y})]] = C_1 K^{-1}
\end{eqnarray}
where $C_1 = \mathbb{E}_{\bm{\theta}|\bm{y}} \mathbb{V}_{\bm{Y}|\bm{\theta}}[\nabla_{\bm{\theta}} P(\bm{\theta}) \cdot \bm{s}(\bm{Y})]$.
Also observe that, for any function $g(\bm{\theta})$, 
\begin{eqnarray}
\text{Cov}_{\bm{Y},\bm{\theta}|\bm{y}}[g(\bm{\theta}),\epsilon(\bm{\theta},\bm{Y})] & = & \mathbb{E}_{\bm{Y},\bm{\theta}|\bm{y}}[g(\bm{\theta})\epsilon(\bm{\theta},\bm{Y})] - \mathbb{E}_{\bm{Y},\bm{\theta}|\bm{y}}[g(\bm{\theta})]\mathbb{E}_{\bm{Y},\bm{\theta}|\bm{y}}[\epsilon(\bm{\theta},\bm{Y})] \\
& = & \mathbb{E}_{\bm{\theta}|\bm{y}} \mathbb{E}_{\bm{Y}|\bm{\theta}}[g(\bm{\theta})\epsilon(\bm{\theta},\bm{Y})] = 0.
\end{eqnarray}
Putting these results together we obtain
\begin{eqnarray}
\rho(K) = \frac{\text{Cov}_{\bm{Y},\bm{\theta}|\bm{y}}[g(\bm{\theta}),\hat{h}(\bm{\theta},\bm{Y})]}{\sqrt{\mathbb{V}_{\bm{Y},\bm{\theta}|\bm{y}}[g(\bm{\theta})]} \sqrt{\mathbb{V}_{\bm{Y},\bm{\theta}|\bm{y}}[\hat{h}(\bm{\theta},\bm{Y})]}} = \frac{\text{Cov}_{\bm{\theta}|\bm{y}}[g(\bm{\theta}),h(\bm{\theta})]}{\sqrt{\mathbb{V}_{\bm{\theta}|\bm{y}}[g(\bm{\theta})]} \sqrt{\mathbb{V}_{\bm{\theta}|\bm{y}}[h(\bm{\theta})] + C_1K^{-1}}}
\end{eqnarray}
from which it follows that 
\begin{eqnarray}
\frac{1}{\rho(K)^2} = \frac{\mathbb{V}_{\bm{\theta}|\bm{y}}[g(\bm{\theta})] (\mathbb{V}_{\bm{\theta}|\bm{\bm{Y}}}[h(\bm{\theta})] + C_1K^{-1})}{\text{Cov}_{\bm{\theta}|\bm{y}}[g(\bm{\theta}),h(\bm{\theta})]^2}
= \frac{1}{\rho(\infty)^2} + \frac{C}{K}
\end{eqnarray}
where $C = C_1 / \text{Cov}_{\bm{\theta}|\bm{y}}[g(\bm{\theta}),h(\bm{\theta})]^2$.
The analogous derivation for Type II models completes the proof.
\end{proof}

A simple corollary of Lemma \ref{technical} is that the reduced-variance estimator converges to the (unavailable) ZV estimator as $K \rightarrow \infty$.
Moreover we can derive an optimal choice for $K$ subject to fixed computational cost:

\begin{proof}[Proof of Lemma \ref{one}]
Starting from Eqn. \ref{tradeoff}, we substitute $I = c/ \ceil{K/K_0}$ and use the identity in Lemma \ref{technical} to obtain
\begin{eqnarray}
r(K) = \frac{1}{c} \ceil[\bigg]{\frac{K}{K_0}} \left( 1 - \frac{K \rho(\infty)^2}{K + C \rho(\infty)^2} \right). \label{cnvr}
\end{eqnarray}
Fig. S3 displays typical cost-normalised variance ratios $r(K)$, for both non-Gaussian (i.e. $\rho(\infty) < 1$; left) and Gaussian (i.e. $\rho(\infty) = 1$; right) cases.
In each case it the minimum is attained at $K = K_0$.
In general, we see from first principles that (i) $r > 0$, (ii) $r(k)$ is increasing for $k = K_0, 2K_0, 3K_0, \dots$, and (iii) $r(k)$ is decreasing on the interval $((n-1)K_0,nK_0]$ and bounded below by $r((n-1)K_0)$, for any $n \in \mathbb{N}$. Thus we have $\arg\min_{K = 1,2,3,\dots} r(K) = K_0$, as required.
\end{proof}

\end{document}